\renewcommand{\ALG@name}{Protocol}
\definecolor{beamer@blendedblue}{rgb}{0.2,0.2,0.7}
\newtheorem{definition}{Definition}
\newtheorem{proposition}[definition]{Proposition}
\newtheorem{lemma}[definition]{Lemma}
\newtheorem{theorem}[definition]{Theorem}
\mathchardef\ordinarycolon\mathcode`\:
\def\vcentcolon{\mathrel{\mathop\ordinarycolon}}
\DeclareFontFamily{U}{mathx}{\hyphenchar\font45}
\DeclareFontShape{U}{mathx}{m}{n}{<-> mathx10}{}
\DeclareSymbolFont{mathx}{U}{mathx}{m}{n}
\DeclareMathAccent{\widebar}{0}{mathx}{"73}
\newcommand{\wt}[1]{\widetilde{#1}}
\newcommand{\wh}[1]{\widehat{#1}}
\DeclareMathOperator{\tr}{Tr}  
\newcommand{\id}{\operatorname{id}}
\newsavebox{\@brx}
\newcommand{\llangle}[1][]{\savebox{\@brx}{\(\m@th{#1\langle}\)}%
  \mathopen{\copy\@brx\kern-0.5\wd\@brx\usebox{\@brx}}}
\newcommand{\rrangle}[1][]{\savebox{\@brx}{\(\m@th{#1\rangle}\)}%
  \mathclose{\copy\@brx\kern-0.5\wd\@brx\usebox{\@brx}}}
\newcommand*{\cD}{\mathcal{D}}
\newcommand*{\cF}{\mathcal{F}}
\newcommand*{\cM}{\mathcal{M}}
\newcommand*{\cN}{\mathcal{N}}
\newcommand*{\cO}{\mathcal{O}}
\newcommand*{\cS}{\mathcal{S}}
\newcommand*{\cT}{\mathcal{T}}
\newcommand{\bE}{\mathbb{E}}
\newcommand\eqnote[1]{\text{\color{blue}{\tiny\sffamily #1}}}
\newlist{todolist}{itemize}{2}
\setlist[todolist]{label=$\square$}
\newcommand{\appendixtitle}[1]{\gdef\@title{#1}}
\newcommand{\appendixauthor}[1]{\gdef\@author{#1}}
\newcommand{\appendixaffiliation}[1]{\gdef\@affiliation{#1}}
\newcommand{\appendixdate}[1]{\gdef\@date{#1}}
\newcommand{\appendixmaketitle}{%
  \begin{center}%
    {\Large \@title \par}%
    \vspace{10pt}
    {\normalsize
      \lineskip .5em%
      \begin{tabular}[t]{c}%
        \@author
      \end{tabular}\par}%
    \vspace{5pt}
    {\itshape \@affiliation \par}%
    {\normalsize \@date}%
  \end{center}%
  \par
}
\begin{document}

\title{{Fidelity Estimation of Entangled Measurements with Local States}}

\author{Zanqiu Shen}
\email{shenzanqiu@baidu.com}
\affiliation{Institute for Quantum Computing, Baidu Research, Beijing 100193, China}

\author{Kun Wang}
\email{wangkun28@baidu.com}
\affiliation{Institute for Quantum Computing, Baidu Research, Beijing 100193, China}

\fontfamily{lmr}\selectfont

\begin{abstract}
We propose an efficient protocol to estimate the fidelity of 
an $n$-qubit entangled measurement device, 
requiring only qubit state preparations and classical data post-processing.
It works by measuring the eigenstates of Pauli operators, 
which are strategically selected according to their importance weights
and collectively contributed by all measurement operators.
We rigorously analyze the protocol's performance and
demonstrate that its sample complexity is
uniquely determined by the number of Pauli operators  
possessing non-zero expectation values with respect to the target measurement. 
Moreover, from a resource-theoretic perspective, 
we introduce the stabilizer R\'enyi entropy of quantum measurements 
as a precise metric to quantify the inherent difficulty of estimating measurement fidelity.
\end{abstract}

\maketitle
\tableofcontents

\setlength{\parskip}{\baselineskip}

\section{Introduction}
In recent years, substantial advancements have been made in constructing 
intermediate-scale quantum devices 
spanning diverse physical platforms~\cite{preskill2018quantum,o2009photonic, kielpinski2002architecture,jones2000geometric,casanova2016noise}. A pivotal phase in the fabrication of such devices involves assessing the performance of the measurement apparatus. This assessment can be executed through methodologies like measurement (detector) tomography \cite{fiuravsek2001maximum, lundeen2009tomography, hofmann2010complete, chen2019detector}, or through measurement fidelity estimation protocols, including those augmented by machine learning \cite{hentschel2010machine, magesan2013experimentally, lennon2019efficiently, nguyen2021deep}, or methods that establish bounds on average measurement fidelity \cite{magesan2013experimentally}. Nevertheless, complete tomography encounters challenges related to high sample complexity, rendering it impractical for large-scale systems. Conversely, measurement fidelity estimation protocols prove significantly more efficient, albeit their potential dependence on entangled state preparation. In scenarios necessitating entangled state preparation, the precision of the estimated fidelity can be notably affected by errors associated with state preparation, 
particularly in the noisy intermediate-scale quantum (NISQ) era~\cite{preskill2018quantum}.

In this work, we present an efficient protocol for estimating the fidelity of quantum measurement devices, exhibiting an exponential acceleration compared to quantum detector tomography and applicability to diverse quantum measurement schemes. First, we articulate the problem formulation. Consider a system comprising $n$ qubits, represented by a Hilbert space $\mathcal{H}$ with a dimension of $2^n$. Our protocol specifically targets the fidelity estimation of a general class of measurements—known as a projection-valued measure (PVM). By showcasing the efficacy of our protocol in certifying a wide range of measurement schemes, including Bell measurements and Elegant Joint Measurement (EJM) \cite{gisin2019entanglement, tavakoli2021bilocal}, we affirm its potential to significantly contribute to the advancement of experimental quantum information science.

Our devised protocol operates by conducting measurements on a randomly selected subset of Pauli observables, employing an importance sampling technique \cite{kliesch2021theory}. Specifically, we calculate the importance weights associated with all ideal PVMs relative to each Pauli operator. Through the accumulation of measurement statistics from this process, we can effectively estimate the measurement fidelity $\mathcal{F}(\mathcal{M}, \mathcal{N})$ between the ideal measurement $\mathcal{M}$ and its actual implementation $\mathcal{N}$. The preparation of product eigenstates for each Pauli observable necessitates repeated calls to the measurement device. 
The number of calls that the protocol makes to measurement device, 
which we shall call the sample complexity, 
depends on the specified measurement $\mathcal{M}$ of interest, 
the desired additive error $\varepsilon$, and the targeted failure probability denoted as $\delta$. 
We derive analytic bounds on the average sample complexity 
to elucidate the efficiency of our protocol. 
We compare our efficient protocol with two other protocols---the globally optimal estimation protocol that involves entangled state preparation and the ``direct'' measurement fidelity estimation protocol
adapted from the seminal direct fidelity estimation proposal by Flammia and Liu~\cite{flammia2011direct}
\footnote{The direct fidelity estimation protocol is proposed independently by Marcus P. da Silva, Olivier Landon-Cardinal and David Poulin~\cite{da2011practical}.}---and 
highlight our protocol's unique advantage in experimental feasibility and efficiency. 
Furthermore,
we conduct comprehensive numerical simulations to substantiate the effectiveness of our protocol.

\section{An efficient protocol}\label{sec:efficient protocol}

\subsection{Protocol description}

The fidelity between a known ideal $n$-qubit projective measurement $\cM$ 
and its actual implementation $\cN$ is defined as
\begin{align}
\label{def: measurement fidelity}
\mathcal{F}(\cM, \cN)
&:= \frac{1}{2^n} \sum_{k=1}^{2^n} \tr[\psi_k V_k],
\end{align}
where $\mathcal{M} \equiv \{ \psi_k=\proj{\psi_k}\}_k$ represents the ideal PVM 
satisfying $\sum_{k=1}^{2^n} \psi_k = \id$, 
where $\id$ is the identity operator,
while $\mathcal{N} \equiv \left\{ V_k \right\}_k$ 
denotes the actual positive operator-valued measure (POVM) 
satisfying $0\leq V_k\leq\id$ and $\sum_{k=1}^{2^n} V_k = \id$. 
Note that the specification of $\cM$ is known to the experimenters 
but the specification of $\cN$ is completely unknown.
The task is to estimate the fidelity $\mathcal{F}(\cM, \cN)$ given accesses to $\cN$.
In the following, we construct an efficient protocol 
to accomplish this task using local state preparations
and rigorously analyze the protocol's performance.

Expanding the measurement operators $\psi_k$ in terms of the set of 
normalized Pauli operators $\{P_k\}_k, k=1,2,...,4^n$, 
which constitutes an orthonormal basis of the $n$-qubit operators, we obtain
\begin{align}
    \cF(\cM, \cN)
=  \frac{1}{2^n} \sum_{k=1}^{2^n} 
    \tr\left[\left(\sum_{l=1}^{4^n}\tr[\psi_k P_l]P_l\right)V_k\right]
= \frac{1}{2^n} \sum_{k=1}^{2^n} \sum_{l=1}^{4^n} \tr[\psi_k P_l] \tr[V_k P_l].
    \label{double sum}
\end{align}
In order to construct an estimator for $\cF(\cM, \cN)$, 
we define a probability mass function $\{q_l\}_{l=1}^{4^n}$ with respect to the Pauli operators as
\begin{align}
    q_l
    &:= \frac{\sum_{k=1}^{2^n} \tr[\psi_k P_l]^2}{2^n}.
    \label{def: importance sampling}
\end{align}
Thanks to the property of PVM, we can show that $\sum_{l=1}^{4^n} q_l = 1$ 
and thus $\{q_l\}_l$ is a valid probability mass function. 
Now, we employ the importance sampling technique \cite{kliesch2021theory} 
to estimate $\cF$ via Eq.~\eqref{double sum}. 
To this end, we rewrite $\cF$ as 
\begin{align}
\cF 
= \sum_{l=1}^{4^n}\left(\sum_{k'=1}^{2^n}\frac{\tr[\psi_{k’}P_l]^2}{2^n}\right)
  \times \left(\sum_{k=1}^{2^n}\tr[V_kP_l] 
  \frac{ \tr[\psi_kP_l]}{\sum_{k'=1}^{2^n}\tr[\psi_{k’}P_l]^2}\right)
\equiv \sum_{l=1}^{4^n} q_l X_l,\label{eq:new quantity}
\end{align}
where the random variable $X_l$ is defined as
\begin{align}\label{eq:random-variable-X-l}
    X_l := \frac{\sum_{k=1}^{2^n}\tr[V_k P_l] \tr[\psi_k P_l]}{\sum_{k'=1}^{2^n}\tr[\psi_{k’}P_l]^2}.
\end{align}
We can show that $\bE[X_l] = \cF$, 
where $\bE[\cdot]$ denotes the expectation value, 
signifying that $X_l$ serves as an unbiased estimator of $\cF$.
From Eq.~\eqref{eq:new quantity}, 
we can construct an estimator of $\cF$ using the sample mean in the following manner.
Initially, $m$ indices $l_i$, where $i = 1, 2, ..., m$ and $l_i \in \{1, 2, ..., 4^n\}$, are sampled according to the importance sampling 
distribution $\{q_l\}_l$ defined in Eq.~\eqref{def: importance sampling}, 
completely determined by the known ideal PVM $\cM$, where 
\begin{align}
    m = \left\lceil 1/(\varepsilon^2 \delta) \right\rceil
    \label{samples: m}
\end{align}
and $\lceil \cdot \rceil$ denotes rounding up. 
Given the estimated values $\{X_{l_i}\}_{i=1}^m$, 
whose estimation procedure will be presented below, 
we can construct an estimator of $\cF$ using the following formula:
\begin{align}\label{ey}
    Y = \frac{1}{m}\sum_{i=1}^m X_{l_i}.
\end{align}
By Chebyshev's inequality (see Appendix~\ref{appx: the efficient protocol} for detailed arguments), 
we assert that $Y$ is an $\varepsilon$-accurate estimator of $\cF$ satisfying
\begin{align}
    {\rm Pr}\left(\vert Y - \cF \vert \geq \varepsilon \right) \leq \delta,
    \label{estimation step 1}
\end{align}
where $\varepsilon$ and $\delta$ are predefined constant additive error and failure probability, respectively.

It is noteworthy that for each randomly chosen $l$, 
the quantity $\tr[V_k P_l]$ need to be estimated from the measurement statistics in order to determine $X_l$. 
Since $P_l$ is not a quantum state, it cannot be used as an input to the measurement device directly.
Thanks to the linearity of the trace function, we can consider the spectral decomposition
\begin{align}\label{eq:spectral-decomposition}
    P_{l} 
= \sum_{a=1}^{2^n}\lambda_{a}^{(l)} {\vert \phi_{a}^{(l)} \rangle \langle \phi_{a}^{(l)} \vert} 
\equiv \sum_{a=1}^{2^n}\lambda_{a}^{(l)} \phi_{a}^{(l)},
\end{align}
where $\{\lambda_{a}^{(l)}\}_a$ and $\{\phi_{a}^{(l)}\equiv\vert \phi_{a}^{(l)} \rangle \langle \phi_{a}^{(l)} \vert\}_a$ 
are the eigenvalues and eigenstates of $P_l$, respectively.
Based on the spectral decomposition, 
we can rewrite $X_l$ in Eq.~\eqref{eq:random-variable-X-l} as follows:
\begin{align}\label{eq:X-l}
    X_l = \sum_{a=1}^{2^n}\lambda_{a}^{(l)}
   \left(\sum_{k=1}^{2^n}\frac{\tr[V_k \phi_{a}^{(l)}]\tr[\psi_kP_l]}{\sum_{k'=1}^{2^n}\tr[\psi_{k'}P_l]^2}\right).
\end{align}
Since each eigenstate $\phi_{a}^{(l)}$ is a product state, 
it can be prepared with high fidelity in NISQ quantum devices.
Correspondingly, $\tr[V_k \phi_{a}^{(l)}]$ can be estimated 
by first measuring $\phi_{a}^{(l)}$ many times with the measurement device $\cN$ and 
then post-processing the measurement statistics.
Specifically, we consider the following subroutine:
\begin{enumerate}
    \item \textit{Sample an index.} Choose an integer $a_j\in\{1,\cdots,2^n\}$ uniformly at random;
    \item \textit{Measure the eigenstate.}
          Use the measurement device $\cN$ to measure 
          the eigenstate $\phi^{(l_i)}_{a_j}$ (corresponding to the Pauli operator $P_{l_i}$) 
          and record the measurement outcome $o_j$.
\end{enumerate}
We define a new random variable as follows:
\begin{align}
\label{wtXl}
    \wh{X}_{a_j}^{(l_i)} := 
\frac{2^n}{\sum_{k'=1}^{2^n}\tr[\psi_{k'}P_{l_i}]^2}\times\lambda^{(l_i)}_{a_j}\times  \tr[\psi_{o_j}P_{l_i}].
\end{align}
It is established that $\bE[\wh{X}_{a_j}^{(l_i)}] = X_{l_i}$ as demonstrated in Appendix~\ref{appx: the efficient protocol}. 
By executing this procedure a total of $n_{l_i}$ times, 
we can estimate $X_{l_i}$ to desired accuracy via the estimator
\begin{align}
    \wt{X}_{l_i} := \frac{1}{n_{l_i}} \sum_{j=1}^{n_{l_i}} \wh{X}_{a_j}^{(l_i)}.
\end{align}

Finally, we introduce the following estimator:
\begin{align}
\label{wtY-ptl-1}
    \wt{Y} 
= \frac{1}{m} \sum_{i=1}^m \wt{X}_{l_i}
=   \sum_{i=1}^m\sum_{j=1}^{n_{l_i}}
    \frac{1}{mn_{l_i}}\times\frac{2^n}{\sum_{k'=1}^{2^n}\tr[\psi_{k’}P_{l_i}]^2}
    \times\lambda^{(l_i)}_{a_j} \times\tr[\psi_{o_j}P_{l_i}].
\end{align}
To meet the confidence interval requirement that
\begin{align}
    {\rm Pr}\left(\vert \wt{Y} - Y \vert \geq \varepsilon \right) \leq \delta,
    \label{estimation step 2}
\end{align}
we choose
\begin{align}\label{nl}
n_{l_i} &=  \left\lceil \frac{1}{\left(\sum_{k'=1}^{2^n}\tr[\psi_{k’}P_{l_i}]^2\right)^2}
        \times\frac{2}{m\varepsilon^2}\ln\frac{2}{\delta} \right\rceil,
\end{align}
where $\ln$ denotes the natural logarithm.
Using the confidence intervals (\ref{estimation step 1}) and (\ref{estimation step 2}) 
and the union bound, we establish a confidence interval of the final estimation $\wt{Y}$ as
\begin{align}\label{est: eps-close est for efficient}
    {\rm Pr}\left\{\vert \wt{Y} - \cF \vert \geq 2 \varepsilon \right\} \leq 2 \delta.
\end{align}

We summarize the random variables and estimation chain involved in our 
estimation protocol in Fig.~\ref{fig: estimation process}
and the complete efficient protocol in \textbf{Protocol~\ref{ptl: efficient protocol}} for reference. 
For extensive details and rigorous proofs regarding the aforementioned protocol, 
consult Appendix~\ref{appx: the efficient protocol}.

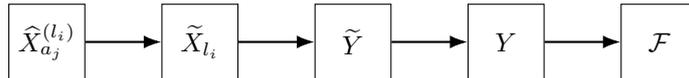
\begin{figure}[!hbtp]
  \centering
  \begin{tikzpicture}[
    node distance=1cm,
    arrow/.style={-Latex, thick},
    box/.style={draw, rectangle, minimum width=1cm, minimum height=1cm, align=center}
  ]

    \node[box] (Xl) {$\wh{X}_{a_j}^{(l_i)}$};
    \node[box, right=of Xl] (wX) {$\wt{X}_{l_i}$};
    \node[box, right=of wX] (wY) {$\wt{Y}$};
    \node[box, right=of wY] (Y) {${Y}$};
    \node[box, right=of Y] (F) {$\mathcal{F}$};

    \draw[arrow] (Xl) -- (wX);
    \draw[arrow] (wX) -- (wY);
    \draw[arrow] (wY) -- (Y);
    \draw[arrow] (Y) -- (F);

  \end{tikzpicture}
  \caption{The random variables and estimation chain involved in our measurement fidelity estimation protocol.}
  \label{fig: estimation process}
\end{figure}

\begin{algorithm}[H]
\caption{An efficient protocol}
\label{ptl: efficient protocol}
\begin{algorithmic}[1]
\REQUIRE $\cM\equiv\{\psi_k\}_k$: the PVM that the $n$-qubit measurement device is designed to implement,\\
\hskip1.9em $\cN\equiv\{V_k\}_k$: the POVM that the $n$-qubit measurement device actually implements, \\
\hskip1.9em $\varepsilon$: the constant additive error, determined by the experimenter, and \\
\hskip1.9em $\delta$: the failure probability, determined by the experimenter.
\ENSURE Estimation of the measurement fidelity $\cF(\cM,\cN)$.
\STATE Calculate the required number of Pauli operators $m = \lceil {1}/{\varepsilon^2 \delta} \rceil$,
\STATE Randomly sample $m$ Pauli operators $P_{l_i}$, $i = 1, 2,..., m$
        with respect to the probability distribution $q_l$ defined in Eq.~\eqref{def: importance sampling}.
\FOR{$i = 1, \cdots, m$}
\STATE Calculate the required number of measurements $n_{l_i}$ using Eq.~\eqref{nl},
\FOR{$j = 1, \cdots, n_{l_i}$}
\STATE Choose some $a_j\in\{1,\cdots,2^n\}$ uniformly at random;
\STATE Use the quantum measurement device to measure the eigenstate $\phi^{(l_i)}_{a_j}$ (corresponding to $P_{l_i}$) 
and record the experimental outcome $o_j$.
\ENDFOR
\STATE Calculate and record $\wt{X}_{l_i}$.
\ENDFOR
\STATE Use the dataset $\{\wt{X}_{l_i}\}_{i=1}^{m}$ to calculate $\wt{Y}$ using Eq.~\eqref{wtY-ptl-1}.
\STATE Output the measurement fidelity estimation $\wt{Y}$
        satisfying ${\rm Pr}(\vert \wt{Y} - \cF \vert \leq 2 \varepsilon) \geq 1 - 2 \delta$.
\end{algorithmic}
\end{algorithm}

It is crucial to note that $\wh{X}_{a_j}^{(l_i)}$ in Eq.~\eqref{wtXl} is always nonzero 
since all possible measurement outcomes are taken into account. 
This implies that we fully leverage the measurement results for the estimation without information loss, 
thus achieving a reduction of the sample complexity by a factor of $2^n$. 
In the following Section~\ref{sec:direct protocol},
we will present an alternative protocol,
which is ``direct'' measurement fidelity estimation protocol
adapted from the seminal direct fidelity estimation 
proposal by Flammia and Liu~\cite{flammia2011direct},
showing that it suffers from information loss and 
is thus less efficient compared to \textbf{Protocol~\ref{ptl: efficient protocol}} presented here.

\subsection{Performance analysis}

In order to establish a theoretical guarantee for the above estimation protocol, 
it is imperative to ensure that the accuracy of the estimation in two steps. 
First, we analyze the accuracy of the estimator $Y$ that relies on a finite number of Pauli observables. 
The precision of $Y$ as an estimator of $\cF$ is governed by the variance of $X_l$ 
and can be enhanced by increasing the number of sampled Pauli operators $m$.
Subsequently, we delve into the analysis of the accuracy of the 
estimator $\wt{Y}$ of $Y$, which incorporates the statistical error incurred by finite measurement outcomes.

We adopt the \emph{total number of calls} of the measurement device $\cN$, defined as
\begin{align}\label{eq:sample-complexity}
    L := \sum_{i=1}^m n_{l_i},
\end{align}
where $n_{l_i}$ is defined in Eq.~\eqref{nl}, as the metric to quantify the protocol's performance. 
This quantity is also termed the sample complexity of the protocol.
Combining Eqs.~\eqref{samples: m} and~\eqref{nl}, 
we establish a lower bound on the expected value of $L$,
showing that the performance of our protocol is
uniquely determined by the number of Pauli operators  
possessing non-zero expectation values with respect to the ideal PVM $\cM$.
The detailed proof is provided in Appendix \ref{appx: the efficient protocol}.
\begin{theorem}
\label{thm: sc of efficient protocol}
Let $\cM\equiv\{\psi_k\}_k$ be an ideal PVM.
To achieve a $2\varepsilon$-close estimator $\wt{Y}$ such that 
${\rm Pr}(\vert \wt{Y} - \cF \vert \leq 2 \varepsilon) \geq 1 - 2 \delta$, 
the number of measurement device calls $L$ in \textbf{Protocol \ref{ptl: efficient protocol}}, 
defined in Eq.~\eqref{eq:sample-complexity}, must satisfy
  \begin{align}
    \mathbb{E}[L] \geq {\frac{\vert \cS \vert^2}{4^n}}\frac{2}{\varepsilon^2}\ln\frac{2}{\delta},
  \end{align}
  where $\mathcal{S}$ is the set of Pauli operators that have non-zero expectation values with respect to $\cM$:
  \begin{align}\label{set s}
    \mathcal{S} := \left\{l \;\middle\vert\;
                \sum_{k=1}^{2^n}\tr[\psi_{k}P_l]^2 \neq 0, \forall l \right\},
    \end{align}
  and $\vert \cS \vert$ denotes the cardinality of the set $\cS$.
\end{theorem}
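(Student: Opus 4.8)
The plan is to compute $\bE[L]$ exactly via linearity and the i.i.d.\ nature of the sampled indices, then bound it below by discarding the ceilings and applying a Cauchy--Schwarz estimate on the resulting sum of reciprocals. Throughout, I would write $w_l := \sum_{k'=1}^{2^n}\tr[\psi_{k'}P_l]^2$ for the importance weight, so that the sampling distribution reads $q_l = w_l/2^n$ by Eq.~\eqref{def: importance sampling}, and $\cS = \{l : w_l \neq 0\}$.

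First I would use that the indices $l_1,\dots,l_m$ are drawn independently from $\{q_l\}_l$, so that by linearity of expectation $\bE[L] = \sum_{i=1}^m \bE[n_{l_i}] = m\sum_{l} q_l\, n_l$, the per-index expectation being identical for every $i$. Since $q_l = 0$ precisely when $l \notin \cS$, the sum collapses to $\bE[L] = m\sum_{l\in\cS} q_l\, n_l$. Next I would drop the ceiling in Eq.~\eqref{nl} via $\lceil x\rceil \geq x$, giving $n_l \geq \frac{1}{w_l^2}\cdot\frac{2}{m\varepsilon^2}\ln\frac{2}{\delta}$, and substitute $q_l = w_l/2^n$ to obtain the clean per-term estimate $q_l\, n_l \geq \frac{1}{2^n w_l}\cdot\frac{2}{m\varepsilon^2}\ln\frac{2}{\delta}$. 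The prefactor $m$ then cancels, leaving
\begin{align}
  \bE[L] \;\geq\; \frac{1}{2^n}\,\frac{2}{\varepsilon^2}\ln\frac{2}{\delta}\sum_{l\in\cS}\frac{1}{w_l}.
\end{align}

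The crux of the argument is a sharp lower bound on $\sum_{l\in\cS} 1/w_l$. Here I would invoke the normalization $\sum_{l} q_l = 1$, established in the main text from the PVM property, which is equivalent to $\sum_{l\in\cS} w_l = 2^n$. Applying the Cauchy--Schwarz inequality to the vectors $(\sqrt{w_l})_{l\in\cS}$ and $(1/\sqrt{w_l})_{l\in\cS}$ yields $\vert\cS\vert^2 \leq \big(\sum_{l\in\cS} w_l\big)\big(\sum_{l\in\cS} 1/w_l\big)$, hence $\sum_{l\in\cS} 1/w_l \geq \vert\cS\vert^2/2^n$. Substituting this back into the displayed bound produces exactly $\bE[L] \geq \frac{\vert\cS\vert^2}{4^n}\frac{2}{\varepsilon^2}\ln\frac{2}{\delta}$, as claimed.

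I expect the only genuinely delicate point to be the bookkeeping around the support $\cS$: the reciprocal $1/w_l$ is undefined off $\cS$, so the restriction to $\cS$ must be carried out before any division, using that $q_l = 0$ there kills the corresponding terms harmlessly. The remaining content is the choice of the correct inequality converting $\sum_{l\in\cS} 1/w_l$ into the $\vert\cS\vert^2$ form; Cauchy--Schwarz (equivalently, the AM--HM inequality) is exactly tight when all nonzero weights $w_l$ are equal, which is the natural extremal regime and confirms that the bound cannot be improved in general.
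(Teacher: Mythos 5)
Your derivation of the lower bound is correct and follows essentially the same route as the paper: compute $\bE[L]=\sum_i\bE[n_{l_i}]$ by linearity, drop the ceiling in Eq.~\eqref{nl}, cancel $m$, and convert $\sum_{l\in\cS}1/w_l$ into $\vert\cS\vert^2/2^n$ via Cauchy--Schwarz (the paper phrases this as the HM--AM inequality, which is the same estimate), using $\sum_l w_l = 2^n$ from the PVM normalization. The only difference is that the paper's proof additionally verifies, via a variance bound with Chebyshev's inequality and a Hoeffding argument, that the prescribed $m$ and $n_{l_i}$ actually deliver the $2\varepsilon$-close guarantee appearing in the theorem's hypothesis, whereas you take those choices as given; the displayed inequality itself is obtained identically.
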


Theorem~\ref{thm: sc of efficient protocol} offers valuable insights into the 
resource requirements for attaining precise estimations, 
taking into consideration the dimensionality of the system. 
It is possible the cardinality $\vert \mathcal{S} \vert$ is significantly less than $4^n$, 
thereby enabling superior performance.

\subsection{Bounds by stabilizer R\'enyi entropy}
\label{sec:bound-by-stabilizer-for-efficient-ptl}

Motivated by the 
insights presented in~\cite{leone2023nonstabilizerness}, 
we introduce the stabilizer R\'enyi entropy of quantum measurements 
as a quantitative measure for evaluating 
the resource requirements associated with the performance assessment of 
a quantum measurement device.
We define an observable $O$ linked to the ideal PVMs $\cM\equiv\{\psi_k\}_k$ as follows:
\begin{align}
\label{def: o}
    O
    &:= \sum_{l=1}^{4^n} \sqrt{\frac{\sum_{k=1}^{2^n} \tr[\psi_k P_l]^2}{2^n}} P_l.
\end{align}
Leveraging $O$, we define the \emph{$\alpha$-stabilizer R\'{e}nyi entropy} of $\cM$ as
\begin{align}
\label{def: alpha sre}
    M_{\alpha}(\cM)
    &:= R_{\alpha}(O) - \ln \tr[O^2] - \ln 2^n,
\end{align}
where 
\begin{align}
    R_{\alpha}(O) := \frac{1}{1- \alpha} \ln \left[\sum_{P\in\cS} \tr[OP]^{2 \alpha}\right]
\end{align}
and the set $\cS$ is defined in Eq.~\eqref{set s}.
We note that the $\alpha$-stabilizer R\'{e}nyi entropy of quantum states 
and quantum unitaries has been defined and investigated in~\cite{leone2022stabilizer}. 
We extend their definitions to the quantum measurements.

Interestingly, we endow the $\alpha$-stabilizer R\'enyi entropy measure $M_{\alpha}(\cM)$ 
with an operational interpretation within the fidelity estimation task
by showing that it induces lower and upper bounds on
our proposed estimation protocol's sample complexity.
This result establishes the fundamental limits and delivered meaningful insights to 
quantum measurement fidelity estimation schemes using the importance sampling technique.
The detailed proof can be found in Appendix~\ref{appx: magic bounds for efficient protocol}.

\begin{theorem}
\label{thm: magic bounds for new method}
    Let $\cM\equiv\{\psi_k\}_k$ be an ideal PVM. 
    To achieve a $2\varepsilon$-close estimator $\wt{Y}$ such that 
    ${\rm Pr}(\vert \wt{Y} - \cF \vert \leq 2 \varepsilon) \geq 1 - 2 \delta$, 
    the number of measurement device calls $L$ in \textbf{Protocol \ref{ptl: efficient protocol}}, 
    defined in Eq.~\eqref{eq:sample-complexity}, must satisfy
    \begin{align}
        \frac{1}{2 \varepsilon^2} \ln\frac{1}{\delta} \exp\left[2 M_2(\cM) \right] \leq L \leq \frac{2^4}{\varepsilon^6} \ln\frac{1}{\delta} \exp[2M_0(\cM)].
    \end{align}
\end{theorem}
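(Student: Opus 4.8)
The plan is to first strip the definitions of $M_2(\cM)$ and $M_0(\cM)$ down to explicit functions of the sampling weights $\{q_l\}$, and then read the two inequalities as statements about these weights. Since $\{P_l\}$ is orthonormal, $\tr[OP_l]=\sqrt{q_l}$, so $\tr[O^2]=\sum_l q_l=1$ and $\ln\tr[O^2]=0$. Hence $R_\alpha(O)=\frac{1}{1-\alpha}\ln\sum_{l\in\cS}q_l^\alpha$, which gives $M_2(\cM)=-\ln\bigl(\sum_{l\in\cS}q_l^2\bigr)-n\ln 2$ and $M_0(\cM)=\ln|\cS|-n\ln 2$. Exponentiating,
\begin{align}
\exp[2M_2(\cM)]=\frac{1}{4^n\bigl(\sum_{l\in\cS}q_l^2\bigr)^2},\qquad
\exp[2M_0(\cM)]=\frac{|\cS|^2}{4^n}.
\end{align}
This already exposes the two relevant ``effective support sizes'': the participation ratio $(\sum_{l\in\cS}q_l^2)^{-1}$ governing the lower bound and the exact count $|\cS|$ governing the upper bound.

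For the lower bound I would reuse the computation behind Theorem~\ref{thm: sc of efficient protocol}. Using $\sum_{k'}\tr[\psi_{k'}P_l]^2=2^n q_l$, bounding $\lceil\cdot\rceil$ from below, and averaging over the $m$ i.i.d.\ draws $l_i\sim\{q_l\}$ gives $\bE[L]\ge\frac{2}{4^n\varepsilon^2}\ln\frac{2}{\delta}\sum_{l\in\cS}q_l^{-1}$. Two applications of Cauchy--Schwarz with $\sum_{l\in\cS}q_l=1$ yield $\sum_{l\in\cS}q_l^{-1}\ge|\cS|^2$ and $|\cS|^2\ge(\sum_{l\in\cS}q_l^2)^{-2}=4^n\exp[2M_2(\cM)]$, so that $\bE[L]\ge\frac{2}{\varepsilon^2}\ln\frac{2}{\delta}\exp[2M_2(\cM)]$. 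Since $\frac{2}{\varepsilon^2}\ln\frac{2}{\delta}\ge\frac{1}{2\varepsilon^2}\ln\frac{1}{\delta}$, this lands exactly on $\frac{1}{2\varepsilon^2}\ln\frac1\delta\exp[2M_2(\cM)]$, completing the lower bound (which, as in Theorem~\ref{thm: sc of efficient protocol}, is naturally a statement about $\bE[L]$).

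For the upper bound I would start from $\lceil x\rceil\le x+1$, giving $L\le m+\frac{2}{4^n m\varepsilon^2}\ln\frac2\delta\sum_{i=1}^m q_{l_i}^{-2}$, whose expectation is $m+\frac{2}{4^n\varepsilon^2}\ln\frac2\delta\sum_{l\in\cS}q_l^{-1}$. The difficulty is that this sum is controlled by $\exp[2M_2(\cM)]$, not by $\exp[2M_0(\cM)]=|\cS|^2/4^n$: the two coincide only for flat weights, and the former can be arbitrarily larger when a Pauli carries a vanishingly small weight. I therefore expect the crux to be a truncation/concentration step that discards the negligible-weight Paulis---those with $q_l$ below a threshold of order $\varepsilon^2/|\cS|$, which are sampled with probability too small to matter and whose contribution to $\cF$ sits below the tolerance---so that the inverse weights actually entering $L$ are effectively capped at $O(|\cS|/\varepsilon^2)$. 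Substituting $q_{l_i}^{-2}\lesssim|\cS|^2/\varepsilon^4$ produces the extra $\varepsilon^{-4}$ on top of the $\varepsilon^{-2}\ln\frac1\delta$ already present, yielding $\frac{2^4}{\varepsilon^6}\ln\frac1\delta\exp[2M_0(\cM)]$. The main obstacle, and the step I would be most careful about, is making this capping rigorous while simultaneously absorbing the additive $m=\lceil 1/(\varepsilon^2\delta)\rceil$ from the ceilings into the stated bound, since reconciling the $\varepsilon$- and $\delta$-dependence of that additive term with the right-hand side is exactly where the argument is most delicate.
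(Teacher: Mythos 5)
Your reduction of $M_2(\cM)$ and $M_0(\cM)$ to the weights $\{q_l\}$ is correct, and your lower bound goes through: starting from $\bE[L]\geq\frac{2}{4^n\varepsilon^2}\ln\frac{2}{\delta}\sum_{l\in\cS}q_l^{-1}$ and chaining Cauchy--Schwarz twice with $\sum_{l\in\cS}q_l=1$ is a valid (and arguably cleaner) alternative to the paper's route, which instead sets $n_{l_i}=1$, applies Hoeffding to the single-shot estimators $C_{l_i}$ whose range is $1/(2^n q_{l_i})$, and bounds $\mathbb{M}:=\min_l 2^n q_l\leq\exp[-M_2(\cM)]$. Both land on the same expression, up to the paper's final substitution $\varepsilon\to2\varepsilon$, $\delta\to2\delta$.

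The upper bound, however, has a genuine gap, and it is exactly the one you flag yourself. Within your framework --- keeping the protocol's two-stage structure with $m=\lceil 1/(\varepsilon^2\delta)\rceil$ Pauli draws and the prescribed $n_{l_i}$ --- the additive term $m\sim 1/(\varepsilon^2\delta)$ coming from the Chebyshev step cannot be absorbed into $\frac{2^4}{\varepsilon^6}\ln\frac{1}{\delta}\exp[2M_0(\cM)]$: as $\delta\to0$ the former grows like $1/\delta$ while the latter grows only like $\ln(1/\delta)$, and $\exp[2M_0(\cM)]=\vert\cS\vert^2/4^n$ offers no compensation (it equals $1$ for a stabilizer PVM). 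No truncation of small-$q_l$ Paulis repairs this, because the $1/\delta$ does not originate from the inverse weights. The paper avoids the obstruction by abandoning the two-stage analysis for the upper bound altogether: it sets $n_{l_i}=1$ so that $L$ is simply the number of Pauli draws, truncates the Pauli coefficients of each $\psi_k$ individually (zeroing $\tr[\psi_kP_l]$ when $\vert\tr[\psi_kP_l]\vert<\frac{\varepsilon}{2\sqrt{2}}\sqrt{\exp[-M_0(\psi_k)]/2^n}$, which costs only an $\varepsilon/2$ bias in $\cF$), and runs a single Hoeffding bound whose range parameter is controlled by $\mathbb{M}'\geq\frac{\varepsilon^2}{8}\exp[-M_0(\cM)]$, giving $L\leq\frac{2^{10}}{\varepsilon^6}\ln\frac{2}{\delta}\exp[2M_0(\cM)]$ with only $\ln(1/\delta)$ dependence. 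So your instinct that a truncation is needed is right, but it must act on the estimator's range (the Pauli coefficients of the $\psi_k$), and the $1/(\varepsilon^2\delta)$ term must be eliminated by switching to a one-shot-per-Pauli Hoeffding analysis rather than absorbed.
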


\section{Many more protocols}

Broadening our exploration of measurement fidelity estimation, we unveil two additional protocols: 
a globally optimal protocol that harnesses the power of entangled state preparation, 
and a protocol directly adapted from the pioneering direct fidelity estimation proposal 
by Flammia and Liu~\cite{flammia2011direct} (this is why we call it a ``direct'' protocol) . 
These protocols, meticulously analyzed for their complexity, 
paint a diverse landscape of estimation techniques.
Delving into the genesis of Protocol~\ref{ptl: efficient protocol}, 
we trace its roots to these global and direct protocols, revealing the motivations 
and insights that led to its development. Culminating this journey, 
we present a comprehensive comparison of the different protocols, 
offering a comprehensive guide for experimenters navigating the realm of measurement fidelity estimation.

\subsection{Global protocol}\label{sec:global protocol}

Here we introduce the global protocol, where
we can prepare entangled states as inputs and achieve the globally optimal estimation efficiency.
The measurement fidelity $\cF$,  
originally defined in Eq.~\eqref{def: measurement fidelity}, can be rewritten as
\begin{align}\label{eq:global-fidelity}
\cF &:= \sum_{k=1}^{2^n}\frac{1}{2^n} \times \tr[\psi_k V_k]  \equiv \sum_{k=1}^{2^n} q_k X_k,
\end{align}
where $q_k := 1/2^n$ and $X_k:= \tr[\psi_k V_k]$. 
Subsequently, we randomly select $k_i$, where $k_i \in \{1, 2, ..., 2^n\}$ according to $\{q_k\}_k$ 
and perform a one-shot measurement on $\psi_{k_i}$. 
Define $\wt{X}_{k_i}:=\delta_{k_i o_i}$ as the estimation of $X_{k_i}$, 
where $o_i$ is the measurement outcome. 
Essentially, this is a simple application of the importance sampling technique 
with respect to the uniform distribution.
We conclude that 
\begin{align}
\label{wtY}
\wt{Y} := \frac{1}{L} \sum_{i=1}^L \wt{X}_{k_i}
\end{align}
is an vaild estimator of $\cF$. 

\textbf{Protocol \ref{ptl: naive protocol}} outlines the overall estimation process and 
we conclude the sample complexity of this global protocol in the following theorem,
whose proof can be found in Appendix~\ref{appx: naive protocol}.
\begin{theorem}
  \label{thm: optimal lower bound for measurement fidelity}
  Let $\cM\equiv\{\psi_k\}_k$ be an ideal PVM.
  To achieve a $2\varepsilon$-close estimator $\wt{Y}$ such that 
  ${\rm Pr}(\vert \wt{Y} - \cF \vert \leq 2 \varepsilon) \geq 1 - 2 \delta$, 
  the total number of measurement device calls
  in the globally optimal \textbf{Protocol \ref{ptl: naive protocol}} must satisfy
  \begin{align}
    L &\geq \frac{1}{8 \varepsilon^2} \ln\frac{1}{\delta}.
  \end{align}
\end{theorem}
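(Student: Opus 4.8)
The plan is to establish the lower bound $L \geq \frac{1}{8\varepsilon^2}\ln\frac{1}{\delta}$ by framing the global protocol as a hypothesis-testing or estimation problem and invoking a fundamental information-theoretic limit. Since each call to the measurement device yields a single classical outcome $o_i$, and the estimator $\wt{X}_{k_i} = \delta_{k_i o_i}$ is a Bernoulli random variable with $\bE[\wt{X}_{k_i}] = \tr[\psi_{k_i} V_{k_i}]$, the quantity $\wt{Y}$ in Eq.~\eqref{wtY} is an average of $L$ independent bounded random variables. The first step is to identify the variance of a single $\wt{X}_{k_i}$: since it is Bernoulli, its variance is $p(1-p)$ with $p = \bE[\wt{X}_{k_i}]$, which is at most $1/4$. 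To force a genuine lower bound rather than an upper bound, I would construct (or appeal to) a worst-case pair of measurements $\cM, \cN$ for which distinguishing the relevant fidelity values to additive accuracy $2\varepsilon$ with confidence $1 - 2\delta$ provably requires this many samples.

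The cleanest route is a two-point (Le Cam style) argument. First I would exhibit two actual POVMs $\cN_0$ and $\cN_1$ whose fidelities with the fixed ideal $\cM$ differ by exactly $4\varepsilon$ (so that any $2\varepsilon$-accurate estimator must, with high probability, distinguish them), yet whose induced outcome distributions on the protocol's input states $\{\psi_{k_i}\}$ are close in total variation or, more conveniently, in Kullback--Leibler divergence. Because the protocol only ever sees i.i.d.\ Bernoulli samples, I would reduce the whole question to distinguishing two Bernoulli distributions with success probabilities differing by $O(\varepsilon)$. The second step is then to apply the standard sample-complexity lower bound for binary hypothesis testing: to distinguish $\mathrm{Bern}(p)$ from $\mathrm{Bern}(p')$ with error probability at most $2\delta$ one needs $L = \Omega\!\left(\frac{1}{(p-p')^2}\ln\frac{1}{\delta}\right)$ samples, which after plugging $p - p' \sim \varepsilon$ and tracking the constants yields precisely the $\frac{1}{8\varepsilon^2}\ln\frac{1}{\delta}$ bound.

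The key steps, in order, are: (i) reduce $\wt{Y}$ to a sum of i.i.d.\ Bernoulli variables and record that each has variance $\leq 1/4$; (ii) construct the two hard instances $\cN_0, \cN_1$ differing in fidelity by $4\varepsilon$ while keeping their single-sample distributions maximally confusable; (iii) bound the KL divergence between the $L$-fold product distributions by $L \cdot D_{\mathrm{KL}}$ and invoke the data-processing / Pinsker inequality to convert a reliable estimator into a reliable distinguisher; and (iv) apply Fano's inequality or the Bretagnolle--Huber bound to conclude that reliable distinguishing is impossible unless $L$ meets the stated threshold, then simplify the constants.

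The hard part will be step (ii): engineering an explicit pair of valid POVMs that simultaneously saturates the fidelity gap and the distributional closeness so that the constant comes out to exactly $1/8$ rather than merely $\Omega(1/\varepsilon^2)$. Getting the sharp constant typically requires choosing the worst-case fidelity near the regime where the Bernoulli variance $p(1-p)$ is maximized (i.e.\ $p \approx 1/2$) and carefully balancing the $\ln\frac{1}{\delta}$ factor against the confidence requirement; a slightly loose construction will still give the correct $\varepsilon$ and $\delta$ dependence but an off constant, so the bookkeeping in the final inequality is where I would concentrate my attention.
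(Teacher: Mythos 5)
Your proposal and the paper's proof are aimed at two different statements. The paper's argument is a two-line concentration bound: since each $\wt{X}_{k_i}=\delta_{k_i o_i}$ is an i.i.d.\ random variable taking values in $\{0,1\}$ with mean $\cF$, Hoeffding's inequality gives ${\rm Pr}(\vert \wt{Y}-\cF\vert\geq\varepsilon)\leq 2\exp(-2L\varepsilon^2)$, and requiring this to be at most $\delta$ forces $L\geq \frac{1}{2\varepsilon^2}\ln\frac{2}{\delta}$; substituting $\varepsilon\to 2\varepsilon$ and $\delta\to 2\delta$ yields the stated $\frac{1}{8\varepsilon^2}\ln\frac{1}{\delta}$. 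In other words, despite the phrase ``must satisfy,'' the theorem is really a \emph{sufficiency} condition --- the sample size at which the Hoeffding tail bound certifies the confidence interval for this particular estimator --- not an information-theoretic impossibility result. You have read it as a genuine minimax lower bound and proposed a Le Cam two-point / hypothesis-testing argument, which is a strictly stronger claim than what the paper establishes.

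Taken on its own terms, your plan also has a concrete gap: step (ii), the construction of two POVMs $\cN_0,\cN_1$ with fidelities differing by $4\varepsilon$ whose induced outcome distributions are maximally confusable, is the entire content of a two-point lower bound, and it is left undone. Moreover, even if you carried it out, the constants delivered by Pinsker or Bretagnolle--Huber would not reproduce $\frac{1}{8\varepsilon^2}\ln\frac{1}{\delta}$ exactly --- that constant is an artifact of the $\varepsilon\to 2\varepsilon$, $\delta\to 2\delta$ rescaling of the Hoeffding condition, not of any optimal testing bound --- so your approach would at best give $\Omega(\varepsilon^{-2}\ln\delta^{-1})$ with a different prefactor, which you yourself anticipate. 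If your goal is to match the paper, the fix is simple: drop the hypothesis-testing machinery entirely and just apply Hoeffding to the Bernoulli variables you already identified in your step (i); the variance observation and everything after it is unnecessary for what the paper actually proves.
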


Note that the sample complexity of the global protocol is 
independent of the dimension of the system, which is appealing in verification efficiency.
However, it necessitates entangled state preparation, 
which introduces a substantial preparation error and is experimentally challenging. 
In such cases, the reliability of the estimation is significantly compromised. 
Therefore, for quantum fidelity estimation tasks, we prefer product state preparations.

\begin{algorithm}[H]
\caption{The globally optimal protocol}
\label{ptl: naive protocol}
\begin{algorithmic}[1]
\REQUIRE $\cM\equiv\{\psi_k\}_k$: the PVM that the $n$-qubit measurement device is designed to implement,\\
\hskip1.9em $\cN\equiv\{V_k\}_k$: the POVM that the $n$-qubit measurement device actually implements, \\
\hskip1.9em $\varepsilon$: the constant additive error, determined by the experimenter, and \\
\hskip1.9em $\delta$: the failure probability, determined by the experimenter.
\ENSURE Estimation of the measurement fidelity $\cF(\cM,\cN)$.
\STATE Calculate the required number of samples:
\begin{align}
    L = \left\lceil \frac{1}{8 \varepsilon^2} \ln\frac{1}{\delta} \right\rceil.
\end{align}
\FOR{$i = 1, \cdots, L$}
\STATE Uniformly sample an ideal PVM operator $\psi_{k_i}$, where $k_i \in \{1, 2, ..., 2^n\}$;
\STATE Measure $\psi_{k_i}$ with the measurement device and record outcome $o_i$.
        Let $\wt{X}_{k_i} = \delta_{k_i o_i}$.
\ENDFOR
\STATE Using the dataset $\{ \wt{X}_{k_i}\}_{i=1}^L$, compute the estimator $\wt{Y} $as Eq.~\eqref{wtY}.
\STATE Output the measurement fidelity estimation $\wt{Y}$
        satisfying ${\rm Pr}(\vert \wt{Y} - \cF \vert \leq 2 \varepsilon) \geq 1 - 2 \delta$.
\end{algorithmic}
\end{algorithm}

\subsection{Direct protocol}\label{sec:direct protocol}

In this section, we adopt the concept of Direct Fidelity Estimation (DFE) for entangled states in \cite{flammia2011direct,da2011practical} and tailor their core idea to measurement fidelity estimation
in a ``direct'' manner. This direct protocol exclusively requires the preparation of product eigenstates 
associated with Pauli observables, almost in the same way as the protocol designed in Section~\ref{sec:efficient protocol}.
However, there a critical difference in designing the importance sampling distributions,
which we will rigorously prove, leads to a significant gap in estimation efficiency between these two protocols.

As always, we rewrite the measurement fidelity $\cF$ as
\begin{align}
\cF
&= \sum_{k=1}^{2^n} \sum_{l=1}^{4^n} \frac{\tr[\psi_k P_{l}]^2}{2^n} \times \frac{\tr[V_k P_{l}]}{\tr[\psi_k P_{l}]}
\equiv \sum_{k=1}^{2^n} \sum_{l=1}^{4^n} q_{kl} X_{kl},
\end{align}
where $q_{kl}:= {\tr[\psi_k P_{l}]^2}/{2^n}$ forms a joint probability mass function and
\begin{align}\label{eq:X-kl}
X_{kl}:= \frac{\tr[V_k P_l]}{\tr[\psi_k P_{l}]}   
\end{align}
denotes a random variable. 
Leveraging the sampling distribution $\{q_{kl}\}_{kl}$ 
and drawing inspiration from the methodology employed in the construction of 
the efficient \textbf{Protocol \ref{thm: sc of efficient protocol}}, 
we provide a succinct summary of the direct protocol in \textbf{Protocol \ref{ptl: original protocol}}
and a more comprehensive elucidation can be found in Appendix \ref{appx: the original protocol}.  

We establish a upper bound on the expected value of the number of calls $L$ 
in \textbf{Protocol \ref{ptl: original protocol}}
and the proof can be found in Appendix~\ref{appx:sc of original protocol}.
\begin{theorem}
\label{thm: sc of original protocol}
Let $\cM\equiv\{\psi_k\}_k$ be an ideal PVM. 
To achieve a $2\varepsilon$-close estimator $\wt{Y}$ such that 
${\rm Pr}(\vert \wt{Y} - \cF \vert \leq 2 \varepsilon) \geq 1 - 2 \delta$, 
the expected number of measurement device calls 
in \textbf{Protocol \ref{ptl: original protocol}} satisfies
\begin{align}
\mathbb{E}[L] \leq 1+\frac{1}{\varepsilon^2 \delta} + 2 \vert \cT \vert \frac{1}{\varepsilon^2} \ln\frac{2}{\delta},
\end{align}
where
\begin{align}
\label{set T}
\cT &:= \left\{ (k, l) \;\middle\vert\; \tr[\psi_k P_{l}] \neq 0, \forall k, l \right\},
\end{align}
and $\vert \cT \vert$ denotes the cardinality of set $\cT$.
\end{theorem}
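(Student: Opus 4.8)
The plan is to mirror the two-stage analysis of Protocol~\ref{ptl: efficient protocol}, adapted to the joint importance-sampling distribution $\{q_{kl}\}$ and the random variables $\{X_{kl}\}$ of the direct protocol. Writing the total cost as $L=\sum_{i=1}^m n_{k_i l_i}$, where $m=\lceil 1/(\varepsilon^2\delta)\rceil$ pairs $(k_i,l_i)$ are drawn i.i.d.\ from $\{q_{kl}\}$ and $n_{k_i l_i}$ is the number of device calls spent estimating $X_{k_i l_i}$, the identical distribution of the pairs gives $\bE[L]=m\,\bE_{(k,l)\sim q}[n_{kl}]$. The whole argument therefore reduces to (i) fixing the correct value of $n_{kl}$ from a two-step confidence requirement, and (ii) averaging it against $q_{kl}$.

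First I would settle the sampling stage by Chebyshev's inequality. Exactly as for the efficient protocol one checks $\bE[X_{kl}]=\cF$, so $Y=\tfrac1m\sum_i X_{k_i l_i}$ is unbiased, and it remains to bound its variance. Using Parseval for the orthonormal Pauli basis together with $0\le V_k\le\id$ (hence $V_k^2\le V_k$ and $\sum_k V_k=\id$),
\begin{align}
\bE[X_{kl}^2]=\sum_{k,l} q_{kl}\,X_{kl}^2=\frac{1}{2^n}\sum_{k,l}\tr[V_k P_l]^2=\frac{1}{2^n}\sum_{k}\tr[V_k^2]\le\frac{1}{2^n}\sum_k\tr[V_k]=1,
\end{align}
so $\mathrm{Var}(X_{kl})\le1$ and the choice $m=\lceil1/(\varepsilon^2\delta)\rceil$ yields $\mathrm{Pr}(|Y-\cF|\ge\varepsilon)\le\delta$.

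Next I would handle the measurement stage by Hoeffding's inequality. For each sampled pair the per-shot estimator is built by drawing $a_j$ uniformly from $\{1,\dots,2^n\}$, measuring the eigenstate $\phi^{(l_i)}_{a_j}$ of $P_{l_i}$, and setting $\wh X_{a_j}^{(k_i,l_i)}=2^n\lambda^{(l_i)}_{a_j}\delta_{o_j,k_i}/\tr[\psi_{k_i}P_{l_i}]$; the uniform index together with the outcome indicator makes it unbiased for $X_{k_i l_i}$. Since the normalized Pauli has $|\lambda^{(l_i)}_{a_j}|=2^{-n/2}$ and $\delta_{o_j,k_i}\in\{0,1\}$, this estimator lies in a range of width $R_i=2\sqrt{2^n}/|\tr[\psi_{k_i}P_{l_i}]|$. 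Applying Hoeffding to the weighted sum $\wt Y=\sum_{i,j}\wh X_{a_j}^{(k_i,l_i)}/(m\,n_{k_i l_i})$ conditioned on the pairs, the choice
\begin{align}
n_{k_i l_i}=\left\lceil\frac{2\cdot 2^n}{\tr[\psi_{k_i}P_{l_i}]^2\,m\varepsilon^2}\ln\frac{2}{\delta}\right\rceil
\end{align}
forces each contribution $R_i^2/(m^2 n_{k_i l_i})\le 2\varepsilon^2/(m\ln\frac2\delta)$, hence $\sum_i R_i^2/(m^2 n_{k_i l_i})\le 2\varepsilon^2/\ln\frac2\delta$ and $\mathrm{Pr}(|\wt Y-Y|\ge\varepsilon)\le\delta$; a union bound then delivers the claimed $(2\varepsilon,2\delta)$ guarantee.

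The crux, and the only place where the direct protocol's inefficiency surfaces, is averaging $n_{kl}$. Using $\lceil x\rceil\le 1+x$ and taking the expectation over $q_{kl}=\tr[\psi_k P_l]^2/2^n$, the importance weight exactly cancels the $1/\tr[\psi_k P_l]^2$ factor, collapsing the sum to a bare count of the support $\cT$:
\begin{align}
\bE_{(k,l)\sim q}[n_{kl}]\le 1+\frac{2\cdot 2^n}{m\varepsilon^2}\ln\frac2\delta\sum_{(k,l)\in\cT}\frac{q_{kl}}{\tr[\psi_k P_l]^2}=1+\frac{2\cdot2^n}{m\varepsilon^2}\ln\frac2\delta\cdot\frac{|\cT|}{2^n}=1+\frac{2|\cT|}{m\varepsilon^2}\ln\frac2\delta.
\end{align}
Multiplying by $m$ and using $m=\lceil1/(\varepsilon^2\delta)\rceil\le1+1/(\varepsilon^2\delta)$ yields $\bE[L]\le1+1/(\varepsilon^2\delta)+2|\cT|\varepsilon^{-2}\ln(2/\delta)$, as stated. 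The main obstacle is really step (i): one must pin down the correct $n_{kl}$, whose $2^n$ numerator—stemming from the uniform eigenstate sampling and the lossy indicator $\delta_{o_j,k_i}$, in contrast to the full weight $\tr[\psi_{o_j}P_{l_i}]$ exploited in the efficient protocol—is precisely what survives as the $|\cT|$ scaling after the cancellation; once $n_{kl}$ is fixed, the averaging is immediate.
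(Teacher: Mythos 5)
Your proposal is correct and follows essentially the same route as the paper's proof: Chebyshev on the variance of $X_{kl}$ (bounded via $\sum_k \tr[V_k^2] \leq 2^n$) to fix $m$, Hoeffding on the bounded per-shot estimator to fix $n_{k_i l_i}$, a union bound, and then averaging $n_{kl}$ against $q_{kl}$ so that the importance weight cancels $1/\tr[\psi_k P_l]^2$ and leaves $|\cT|$. The only cosmetic difference is that you justify $\sum_k\tr[V_k^2]\leq 2^n$ via $V_k^2\leq V_k$ rather than via the paper's Lemma on the nonnegativity of the cross terms $\tr[V_kV_j]$; both are valid.
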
 

\begin{algorithm}[H]
\caption{A direct protocol}
\label{ptl: original protocol}
\begin{algorithmic}[1]
\REQUIRE $\cM\equiv\{\psi_k\}_k$: the PVM that the $n$-qubit measurement device is designed to implement,\\
\hskip1.9em $\cN\equiv\{V_k\}_k$: the POVM that the $n$-qubit measurement device actually implements, \\
\hskip1.9em $\varepsilon$: the constant additive error, determined by the experimenter, and \\
\hskip1.9em $\delta$: the failure probability, determined by the experimenter.
\ENSURE Estimation of the measurement fidelity $\cF(\cM,\cN)$.
\STATE Calculate the required number of Pauli operators $m = \lceil {1}/{\varepsilon^2 \delta} \rceil$,
\STATE Jointly sample $m$ Pauli operators $\{P_{l_i}\}_{i=1}^m$ and $m$ ideal PVM elements $\{\psi_{k_i}\}_{i=1}^m$ with the joint probability mass function $q_{kl}:= {\tr[\psi_k P_{l}]^2}/{2^n}$.
\FOR{$i = 1, \cdots, m$}
\STATE Calculate the required number of measurements
    $n_i=\lceil 2^{n+1}/(m \varepsilon^2\tr[\psi_{k_i} P_{l_i}]^2)\ln(2/\delta)\rceil$.
\FOR{$j = 1, \cdots, n_{l_i}$}
\STATE Choose some $a_j\in\{1,\cdots,2^n\}$ uniformly at random,
\STATE Use the quantum measurement device to measure the eigenstate $\phi^{(l_i)}_{a_j}$ (corresponding to $P_{l_i}$) and record the experimental outcome $o_j$.
\ENDFOR
\STATE Calculate and record $\wt{X}_{k_i l_i}$.
\ENDFOR
\STATE Use the dataset $\{\wt{X}_{k_i l_i}\}_{i=1}^m$ and calculate $\wt{Y}$.
\STATE Output the measurement fidelity estimation $\wt{Y}$
        satisfying ${\rm Pr}(\vert \wt{Y} - \cF \vert \leq 2 \varepsilon) \geq 1 - 2 \delta$.
\end{algorithmic}
\end{algorithm}

We observe that the maximum cardinality of $\vert \cT \vert$ is $8^n$, indicating that the worst-case sample complexity scales as $\cO(8^n)$. While this represents an improvement of a factor of $2^n$ compared to 
standard measurement tomography, a substantial performance gap still exists when compared to the global \textbf{Protocol \ref{ptl: naive protocol}}. 
It is evident that there is a trade-off between entangled state preparation and sample complexity. 
In the efficient \textbf{Protocol \ref{ptl: efficient protocol}}, 
we further reduce the sample complexity while relying solely on product state preparation.

Similar to Section \ref{sec:bound-by-stabilizer-for-efficient-ptl}, 
we can bound the sample complexity of the direct \textbf{Protocol \ref{ptl: original protocol}} 
by the stabilizer R\'enyi entropy $M_\alpha(\cM)$ defined in Eq.~\eqref{def: alpha sre} 
in the following theorem. 
The detailed proof can be found in Appendix~\ref{appx: magic bounds for original protocol}.
It can be seen that the sample complexity is bounded from below by $\max_k \exp[M_2(\psi_k)] $ 
and from below by $\max_k \exp[M_0(\psi_k)]$. 

\begin{theorem}
\label{thm: magic bounds for original protocol}
Let $\cM\equiv\{\psi_k\}_k$ be an ideal PVM. 
To achieve a $2\varepsilon$-close estimator $\wt{Y}$ such that 
${\rm Pr}(\vert \wt{Y} - \cF \vert \leq 2 \varepsilon) \geq 1 - 2 \delta$, 
the number of measurement device calls $L$ in \textbf{Protocol \ref{ptl: original protocol}} must satisfy
\begin{align}
  \frac{4^n}{2 \varepsilon^2} \ln\frac{1}{\delta} \max_k \exp[M_2(\psi_k)] 
  &\leq L \leq \frac{4 \cdot 4^n}{\varepsilon^4} \ln\frac{1}{\delta} \max_k \exp[M_0(\psi_k)].
\end{align}
\end{theorem}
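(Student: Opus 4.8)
The plan is to obtain both bounds by feeding the sample-complexity estimate of Theorem~\ref{thm: sc of original protocol} into a ``dictionary'' that rewrites the combinatorial data attached to $\cT$ in terms of the single-state stabilizer \renyi\ entropies $M_\alpha(\psi_k)$. Writing $\cT_k := \{l : \tr[\psi_k P_l]\neq 0\}$ so that $\vert\cT\vert = \sum_{k=1}^{2^n}\vert\cT_k\vert$, I first record the two identities I will need. Since each $\psi_k$ is pure with $\tr[\psi_k^2]=1$, the single-state analogue of the measurement-level definition~\eqref{def: alpha sre} gives $\exp[M_0(\psi_k)] = \vert\cT_k\vert/2^n$ (the $\alpha\to 0$ limit counts the nonzero Pauli coefficients) and $\exp[M_2(\psi_k)] = 1/\bigl(2^n\sum_{l}\tr[\psi_k P_l]^4\bigr)$. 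Both directions then reduce to estimating $\vert\cT\vert$, together with the per-state weight sums $\sum_l\tr[\psi_k P_l]^4$, against $\max_k\exp[M_\alpha(\psi_k)]$.

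For the upper bound I would use the $M_0$ identity directly. The only combinatorial inequality needed is $\vert\cT\vert = \sum_k\vert\cT_k\vert \leq 2^n\max_k\vert\cT_k\vert = 4^n\max_k\exp[M_0(\psi_k)]$, i.e.\ the sum of the per-state supports is dominated by $2^n$ times the largest one. Substituting this into $\mathbb{E}[L]\leq 1 + 1/(\varepsilon^2\delta) + 2\vert\cT\vert\,\varepsilon^{-2}\ln(2/\delta)$ from Theorem~\ref{thm: sc of original protocol} produces a leading term of the advertised shape $4^n\,\varepsilon^{-2}\ln(2/\delta)\max_k\exp[M_0(\psi_k)]$. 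What remains is to absorb the additive $1 + 1/(\varepsilon^2\delta)$ and the slack between $\varepsilon^{-2}$ and $\varepsilon^{-4}$; here I would invoke the elementary fact that every pure state has Pauli support of size at least $2^n$, so $\exp[M_0(\psi_k)]\geq 1$ and thus $\max_k\exp[M_0(\psi_k)]\geq 1$, which lets the lower-order terms be swallowed by the loose $\varepsilon^{-4}$ prefactor and the constant $4$.

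For the lower bound I would combine two estimates. The expectation route: dropping the ceilings in the per-sample counts $n_{l_i}$ in the computation behind Theorem~\ref{thm: sc of original protocol} gives $\mathbb{E}[L]\geq 2\vert\cT\vert\,\varepsilon^{-2}\ln(2/\delta)$, and Cauchy--Schwarz, $1=\bigl(\sum_l\tr[\psi_k P_l]^2\bigr)^2\leq\vert\cT_k\vert\sum_l\tr[\psi_k P_l]^4$, yields $\vert\cT_k\vert\geq 2^n\exp[M_2(\psi_k)]$ and hence $\vert\cT\vert\geq 2^n\max_k\exp[M_2(\psi_k)]$. The deterministic route: since $\langle\psi_k\vert\sigma_l\vert\psi_k\rangle\in[-1,1]$ for the unnormalized Pauli $\sigma_l$, one has $\tr[\psi_k P_l]^2\leq 1/2^n$ for every $k,l$, so each count obeys $n_{l_i}\geq 2\cdot 4^n\,(m\varepsilon^2)^{-1}\ln(2/\delta)$ and therefore $L\geq 2\cdot 4^n\,\varepsilon^{-2}\ln(2/\delta)$ deterministically. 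The target $\tfrac{4^n}{2}\,\varepsilon^{-2}\ln(1/\delta)\max_k\exp[M_2(\psi_k)]$ carries the $4^n$ of the deterministic bound and the $\max_k\exp[M_2]$ of the expectation bound.

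The step I expect to be the main obstacle is precisely producing the $4^n$ prefactor and the $\max_k\exp[M_2(\psi_k)]$ factor simultaneously, since neither route above delivers both: collapsing into $\vert\cT\vert$ gives only $2^n\max_k\exp[M_2]$, while the uniform Pauli bound gives $4^n$ but discards the magic content. I would attempt the synthesis by conditioning the random sum on samples of the hardest state $\psi_{k^\ast}$ realizing $\max_k\exp[M_2]$ (drawn with total weight $\sum_l q_{k^\ast l}=2^{-n}$): for those samples the bound $\tr[\psi_{k^\ast}P_l]^2\leq 1/2^n$ inflates $n_{l_i}$, while the spread of $\psi_{k^\ast}$'s Pauli distribution, quantified by $\exp[M_2(\psi_{k^\ast})]$, controls how small $\tr[\psi_{k^\ast}P_l]^2$ typically is. Tracking these two effects at the level of individual $n_{l_i}$, rather than after summing to $\vert\cT\vert$, is the delicate calculation; the monotonicity $M_2(\psi_k)\leq M_0(\psi_k)$ and the inequalities $\exp[M_\alpha(\psi_k)]\geq 1$ are the auxiliary facts that keep the lower and upper bounds mutually consistent and let the ceiling corrections be discarded.
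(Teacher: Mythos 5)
There is a genuine gap, and it is concentrated exactly where you predicted: the lower bound. Your two routes each deliver only half of the target --- the Cauchy--Schwarz route through $\vert\cT\vert$ gives $2^n\max_k\exp[M_2(\psi_k)]$, and the uniform bound $\tr[\psi_k P_l]^2\le 1/2^n$ gives $4^n$ without the magic factor --- and the proposed ``conditioning on the hardest state'' synthesis is never carried out. The paper avoids the need for any synthesis by not going through $\vert\cT\vert$ or through $\mathbb{E}[L]$ at all: it analyzes the single-shot version of the estimator ($n_{k_il_i}=1$, $L$ total shots), where Hoeffding's inequality forces $L\ge \frac{2\cdot 2^n}{\mathbb{M}^2\varepsilon^2}\ln\frac{2}{\delta}$ with $\mathbb{M}:=\min_{k,l}\vert\tr[\psi_k P_l]\vert$, and then uses the elementary chain $\mathbb{M}^2\le\min_k\bigl(\mathbb{E}_l\vert\tr[\psi_k P_l]\vert\bigr)^2\le\min_k\sum_l\tr[\psi_k P_l]^4=\min_k\exp[-M_0^{\phantom{0}}\!\!\!\!\!\!M_2(\psi_k)]/2^n$ --- i.e.\ the minimum Pauli weight is at most the purity-weighted average --- so that $2^n/\mathbb{M}^2\ge 4^n\max_k\exp[M_2(\psi_k)]$ produces the $4^n$ and the $\max_k\exp[M_2]$ in one stroke. (Here $\mathbb{E}_l$ is with respect to the weights $\tr[\psi_k P_l]^2$, which sum to one.) This is the single ``dictionary entry'' your plan is missing.

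Your upper bound also does not establish the stated claim. First, you are bounding $\mathbb{E}[L]$ via Theorem~\ref{thm: sc of original protocol}, whereas the theorem (as the paper proves it) bounds the deterministic shot count of the single-shot Hoeffding estimator; these are different quantities attached to different configurations of the protocol. Second, the absorption step fails quantitatively: the additive term $1/(\varepsilon^2\delta)$ from Theorem~\ref{thm: sc of original protocol} is \emph{not} dominated by $\frac{4\cdot 4^n}{\varepsilon^4}\ln\frac{1}{\delta}\max_k\exp[M_0(\psi_k)]$ for small $\delta$ at fixed $n$, since $1/\delta$ grows faster than $\ln(1/\delta)$. The $\varepsilon^{-4}$ in the theorem is not ``loose slack'' to be filled; it arises structurally from the paper's truncation argument, in which the Pauli coefficients of $\psi_k$ below the threshold $\frac{\varepsilon}{2\sqrt{2}}\sqrt{\exp[-M_0(\psi_k)]/2^n}$ are discarded (at a fidelity bias of at most $\varepsilon/2$), so that the Hoeffding range of the truncated estimator is controlled by the inverse threshold; squaring that threshold is what converts one factor of $\varepsilon^{-2}$ into $\varepsilon^{-4}$. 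Your identities $\exp[M_0(\psi_k)]=\vert\cT_k\vert/2^n$ and $\exp[M_2(\psi_k)]=1/(2^n\sum_l\tr[\psi_k P_l]^4)$ are correct and are indeed the right dictionary, but they need to be applied to the per-sample Hoeffding ranges (via $\mathbb{M}$ and its truncated analogue $\mathbb{M}'$), not to the cardinality $\vert\cT\vert$ after summation.
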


\subsection{Comparisons of the protocols} 

In this work, we have introduced three different measurement fidelity estimation protocols:
\begin{itemize}
\item \textbf{Protocol~\ref{ptl: naive protocol}} in Section~\ref{sec:efficient protocol} 
        achieves the optimal estimation efficiency, 
        yet it necessitates entangled state preparation and is thus experimentally challenging.
\item \textbf{Protocol~\ref{ptl: original protocol}} in Section~\ref{sec:direct protocol} 
        is a ``direct'' measurement fidelity estimation protocol
        adapted from the seminal direct fidelity estimation proposal by Flammia and Liu~\cite{flammia2011direct}.
        It requires only local state preparations but its sample complexity is as high as $\cO(8^n)$,
        where $n$ is the number of qubits of the target measurement.
\item \textbf{Protocol~\ref{ptl: efficient protocol}} in Section~\ref{sec:efficient protocol} 
        shares the benefit of \textbf{Protocol~\ref{ptl: original protocol}}
        in requiring only local state preparations.
        Remarkably, it achieves a significant breakthrough in efficiency, 
        boosting it to $\Omega(\vert\cS\vert^2/4^n)$ thanks to a novel importance sampling 
        distribution that avoids information loss. 
\end{itemize}
Table~\ref{tbl:comparison} paints a clear picture: 
all our estimation protocols outperform the standard Quantum Detector Tomography (QDT) by 
a significant factor in terms of required samples as expected. 
\textbf{Protocols~\ref{ptl: efficient protocol}} and~\textbf{\ref{ptl: original protocol}} truly shine, 
not only demonstrating the better estimation efficiency but also remaining experimentally friendly 
as they require only local state preparations. 
This combination of superior performance and practical feasibility 
makes them compelling choices for real-world measurement fidelity estimation tasks.

\begin{table}[!hbtp]
\centering
\setlength{\tabcolsep}{8pt}
\setlength\heavyrulewidth{0.3ex}
\renewcommand{\arraystretch}{1.5}
\begin{tabular}{@{}ccc@{}}
    \toprule
        \textbf{Protocol} & \textbf{Sample complexity} & \textbf{Remark} \\\hline
        \textbf{Protocol~\ref{ptl: efficient protocol} in Section~\ref{sec:efficient protocol}} 
            & {$\Omega(\vert\cS\vert^2/4^n)$}  & local states \\
        \textbf{Protocol~\ref{ptl: naive protocol} in Section~\ref{sec:global protocol}} 
            & $\cO(1)$  & entangled states \\
        \textbf{Protocol~\ref{ptl: original protocol} in Section~\ref{sec:direct protocol}} 
            & $\cO(8^n)$ & local states \\
        \textbf{Quantum detector tomography~\cite{surawy2022projected}} 
            & $\cO(16^n)$ & local states \\
    \bottomrule
\end{tabular}
\caption{Comparison of different protocols for the measurement fidelity estimation task.
Here, $n$ is the number of qubits of the target measurement $\cM$ 
and $\mathcal{S}$ is the set of $n$-qubit Pauli operators that have non-zero expectation values 
with respect to $\cM$; cf. Eq.~\eqref{set s} for definition.}
\label{tbl:comparison}
\end{table}

To further explore the intrinsic differences between \textbf{Protocol \ref{ptl: efficient protocol}} and 
\textbf{Protocol \ref{ptl: original protocol}}, apart from the apparent difference in sample complexity, 
we compute and compare their Root Mean Square Errors (RMSE) and 
the conclusion is drawn in the following proposition.
This result will be validated numerically in the following section 
and the detailed proof can be found in Appendix~\ref{appx: prop. 8}.

\begin{proposition}
\label{coro: var comp}
Under the assumption that $X_l$~\eqref{eq:X-l} and $X_{kl}$~\eqref{eq:X-kl} can be perfect estimated,
the RMSE of \textbf{Protocol~\ref{ptl: efficient protocol}} takes the form of 
$\mathcal{O}(c_1/\sqrt{m})$ and the RMSE of \textbf{Protocol~\ref{ptl: original protocol}} 
takes the form of $\mathcal{O}(c_3/\sqrt{m})$, where $m$ is the number of sampled Pauli operators. 
It holds that $c_1 \leq c_3$, indicating that \textbf{Protocol \ref{ptl: efficient protocol}} 
is conclusively more efficient than or equally efficient to \textbf{Protocol \ref{ptl: original protocol}}.
\end{proposition}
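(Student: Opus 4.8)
The plan is to compute the variance of each protocol's single-sample estimator, since the RMSE of an average of $m$ i.i.d.\ unbiased estimates is exactly the per-sample standard deviation divided by $\sqrt{m}$. Under the stated idealization that $X_l$ and $X_{kl}$ are estimated perfectly (i.e.\ we ignore the finite-measurement error from the inner loop and treat $Y$, not $\wt{Y}$, as the output), both protocols produce unbiased estimators of $\cF$, so the RMSE equals the standard deviation of the sample mean. For \textbf{Protocol~\ref{ptl: efficient protocol}} this is $\sqrt{\operatorname{Var}[X_l]/m}$, giving $c_1 = \sqrt{\operatorname{Var}[X_l]} = \sqrt{\bE[X_l^2] - \cF^2}$; for \textbf{Protocol~\ref{ptl: original protocol}} it is $\sqrt{\operatorname{Var}[X_{kl}]/m}$, giving $c_3 = \sqrt{\bE[X_{kl}^2] - \cF^2}$. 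Here the expectations are taken with respect to $\{q_l\}$ and $\{q_{kl}\}$ respectively. Since both share the $-\cF^2$ term, establishing $c_1 \le c_3$ reduces to the single inequality $\bE_{q_l}[X_l^2] \le \bE_{q_{kl}}[X_{kl}^2]$.

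The key step is therefore to expand both second moments explicitly and recognize the comparison as an application of a convexity or Cauchy--Schwarz bound. First I would write
\begin{align}
\bE_{q_l}[X_l^2]
= \sum_{l=1}^{4^n} q_l X_l^2
= \sum_{l=1}^{4^n} \frac{1}{2^n}
  \frac{\left(\sum_{k=1}^{2^n}\tr[V_kP_l]\tr[\psi_kP_l]\right)^2}
       {\sum_{k'=1}^{2^n}\tr[\psi_{k'}P_l]^2},
\end{align}
using the definitions~\eqref{def: importance sampling} and~\eqref{eq:random-variable-X-l}. Next I would write the direct protocol's second moment as
\begin{align}
\bE_{q_{kl}}[X_{kl}^2]
= \sum_{k=1}^{2^n}\sum_{l=1}^{4^n} q_{kl} X_{kl}^2
= \sum_{k=1}^{2^n}\sum_{l=1}^{4^n}
  \frac{1}{2^n}\tr[V_kP_l]^2,
\end{align}
where the $\tr[\psi_kP_l]^2$ in $q_{kl}$ cancels against the denominator of $X_{kl}^2$. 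The heart of the argument is then to bound the numerator of the efficient protocol's term: by the Cauchy--Schwarz inequality applied to the vectors $(\tr[V_kP_l])_k$ and $(\tr[\psi_kP_l])_k$,
\begin{align}
\left(\sum_{k}\tr[V_kP_l]\tr[\psi_kP_l]\right)^2
\le \left(\sum_{k}\tr[V_kP_l]^2\right)
    \left(\sum_{k}\tr[\psi_kP_l]^2\right).
\end{align}
Dividing by $\sum_{k'}\tr[\psi_{k'}P_l]^2$ cancels the second factor exactly, leaving $\sum_k \tr[V_kP_l]^2$, so that each term of $\bE_{q_l}[X_l^2]$ is bounded by the corresponding sum $\frac{1}{2^n}\sum_k\tr[V_kP_l]^2$, which is precisely the $l$-summand of $\bE_{q_{kl}}[X_{kl}^2]$. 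Summing over $l$ gives $\bE_{q_l}[X_l^2] \le \bE_{q_{kl}}[X_{kl}^2]$ and hence $c_1 \le c_3$.

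I expect the main obstacle to be bookkeeping rather than conceptual difficulty: I must make sure that both estimators are genuinely unbiased with the same mean $\cF$ (already asserted in the excerpt for $X_l$, and straightforward for $X_{kl}$), so that the $\cF^2$ terms match and cancel in the comparison, and that the variance-to-RMSE passage is valid, which requires the samples within each protocol to be i.i.d.\ draws from the respective importance-sampling distribution. A secondary subtlety is handling the Pauli indices $l$ for which $\sum_{k'}\tr[\psi_{k'}P_l]^2 = 0$ (i.e.\ $l \notin \cS$): these contribute zero probability weight $q_l$ and should simply be excluded from the sum, so the Cauchy--Schwarz step is only applied where the denominator is nonzero, and the term-by-term domination still holds on the support. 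Once the per-term inequality is in place, the conclusion $c_1 \le c_3$ and thus the relative efficiency claim follows immediately.
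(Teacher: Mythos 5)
Your proposal is correct and follows essentially the same route as the paper: both identify $c_1$ and $c_3$ with the per-sample (co)variances of $X_l$ and $X_{kl}$ under their respective importance distributions, reduce the claim to comparing second moments, and apply the Cauchy--Schwarz inequality in the index $k$ so that the factor $\sum_{k'}\tr[\psi_{k'}P_l]^2$ cancels termwise, leaving $\frac{1}{2^n}\sum_{k}\tr[V_k^2]$. The only cosmetic difference is that you take $c_1,c_3$ to be standard deviations while the paper writes them as variances, which does not affect the inequality since the square root is monotone.
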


\section{Numerics}

To comprehensively validate the precision of our protocols in tackling real-world challenges, 
we embark on a rigorous series of numerical simulations. 
With a focus on estimating the fidelity of Bell measurements as shown in Fig.~\ref{fig:bell-measurement}, 
a pillar of quantum teleportation and dense coding, 
these simulations unveil intriguing insights into the performance of our protocols 
and their potential to revolutionize quantum information tasks.

\begin{figure}[!htbp]
    \centering
    \begin{quantikz}
        \lstick[wires=2]{$\ket{\phi}$} & \ctrl{1} & \gate{H}   & \meter{} \\
        \lstick{}                      & \targ{}  & \qw        & \meter{} \\
    \end{quantikz}
    \caption{An illustrative quantum circuit for implementing the Bell measurement.
            The two-qubit controlled gate is the CX gate and 
            the two single-qubit measurements are the computational basis measurements.
            Here $\varepsilon=\delta=0.05$.}
    \label{fig:bell-measurement}
\end{figure}
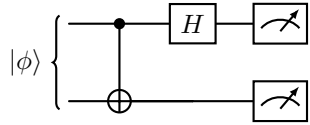

Specific accuracy parameters, namely $\varepsilon = \delta = 0.05$, are set as standards. 
Notably, depolarizing errors were intentionally introduced on the two-qubit CX gate. 
These errors are precisely defined by the depolarizing operator $\cD(\rho) = (1 - p)\rho + p \tr[\rho] \cdot {\id}/{2^n}$, where $p$ represents the depolarizing error rate and $\rho$ signifies the input state.
The results, depicted in Fig.~\ref{fig1}, reveal a remarkable agreement between the estimated 
fidelity and the corresponding theoretical fidelity across diverse depolarizing error rates. 
Furthermore, the estimated fidelity consistently conforms to the predefined upper and lower bounds, 
as articulated in Eq.~\eqref{est: eps-close est for efficient}.

\begin{figure}[!hbtp]
	\centering
	\includegraphics[width=0.8\textwidth]{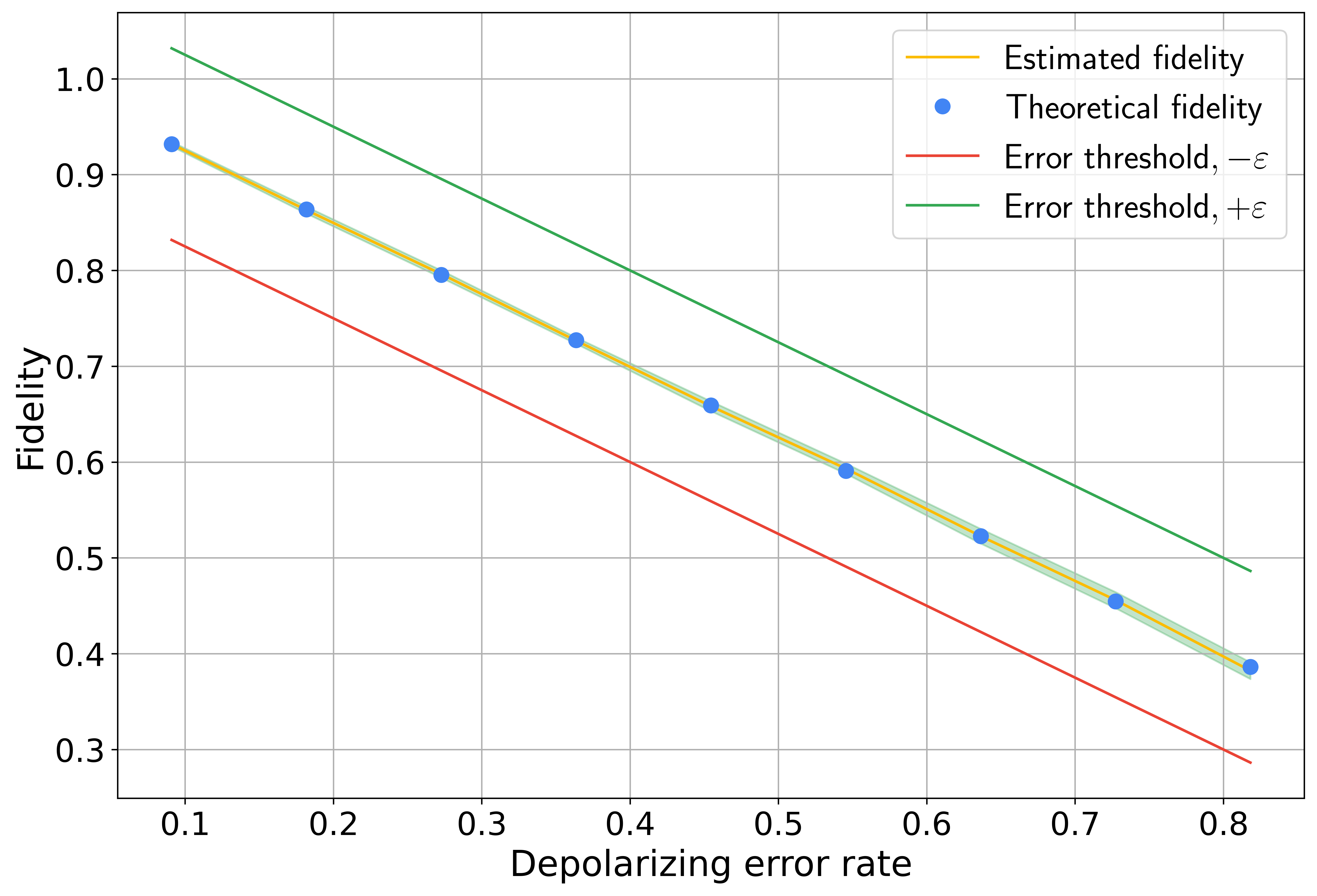}
	\caption{Apply \textbf{Protocol~\ref{ptl: efficient protocol}}
                to estimate the fidelity of a Bell measurements.
            The Bell measurement suffers from depolarizing error scenarios. The variance (shaded area) in spectra is obtained by repeating the efficient \textbf{Protocol \ref{ptl: efficient protocol}} $10$ times.}
	\label{fig1}
\end{figure}
\begin{figure}[!htbp]
	\centering
	\includegraphics[width=0.8\textwidth]{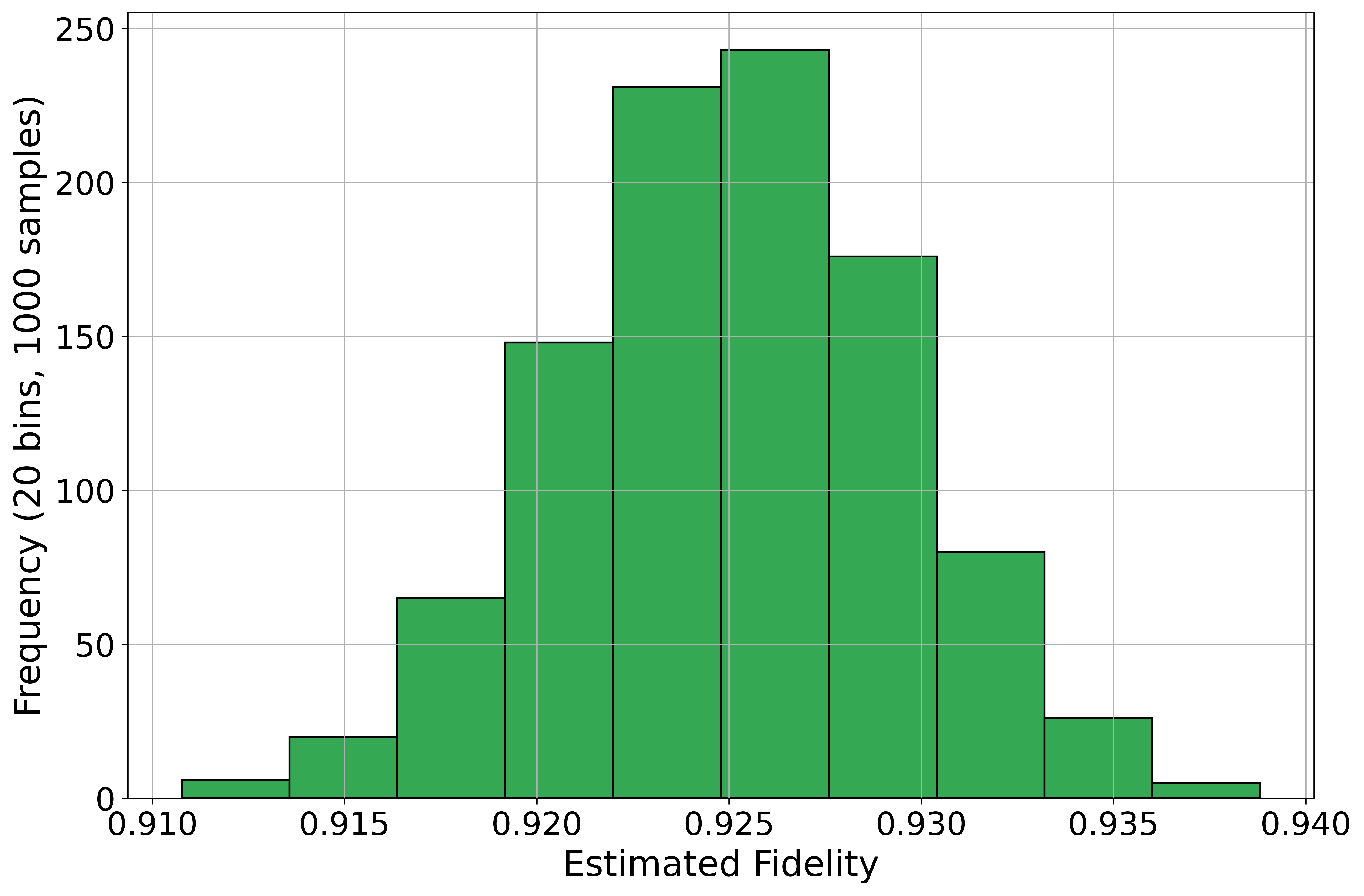}
	\caption{The histogram of the estimated measurement fidelities of a Bell measurements.}
	\label{fig: hist Bell Meas}
\end{figure}
\begin{figure}[!hbtp]
	\centering
	\includegraphics[width=0.8\textwidth]{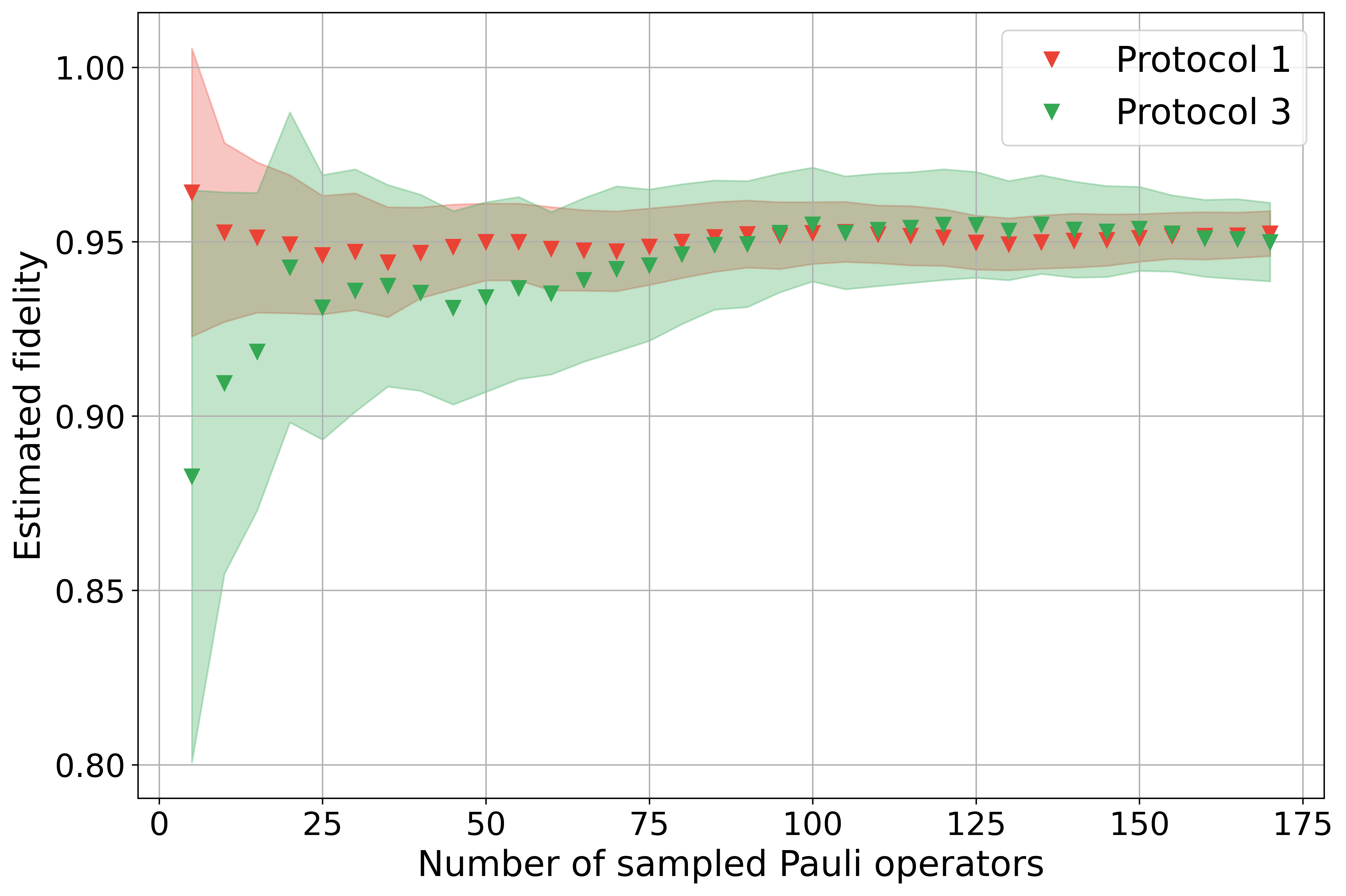}
	\caption{Performance comparisons of
            \textbf{Protocol~\ref{ptl: efficient protocol}} 
            and \textbf{Protocol~\ref{ptl: original protocol}}
            on estimating the EJM measurement with $\theta = \pi / 3$.
            The variance (shaded area) in spectra is obtained by repeating each protocol $30$ times.}
	\label{fig: comparisons be two protocol on EJM with 0.33pi}
\end{figure}

In an additional investigation, we scrutinize the variations in estimated fidelity 
under a constant depolarizing error rate $p=0.1$. 
As illustrated in Fig.~\ref{fig: hist Bell Meas}, 
a histogram plotting the distribution of $1000$ estimated fidelity values is 
visualized under the parameter conditions $\varepsilon = \delta = 0.05$. 
Notably, all the estimated fidelity values remain within 
a predefined additive error range of $2\varepsilon=0.1$ 
centered at the theoretical fidelity $0.925$. 
Consequently, the results portrayed in Figs.~\ref{fig1} and~\ref{fig: hist Bell Meas} 
furnish compelling numerical substantiation for the validity of our theoretical findings.

Expanding beyond the fidelity estimation of Bell measurements,
our investigation delves into the comprehensive evaluation of measurement fidelity 
for the recently introduced EJM schemes~\cite{gisin2019entanglement}. 
The generalized EJM bases, parameterized by $\theta$, are defined as~\cite{tavakoli2021bilocal}
\begin{align}
\label{def: ejm base}
   \vert \Phi_b^{\theta} \rangle
&= \frac{\sqrt{3} + \exp(i \theta)}{2\sqrt{2}} \vert \overrightarrow{m_b}, - \overrightarrow{m_b} \rangle 
 + \frac{\sqrt{3} - \exp(i \theta)}{2\sqrt{2}}\vert - \overrightarrow{m_b}, \overrightarrow{m_b} \rangle,
\end{align}
where $\vert \overrightarrow{m_b} \rangle$ represents the pure qubit states 
pointing towards the four vertices of a regular tetrahedron.
In Fig.~\ref{fig: comparisons be two protocol on EJM with 0.33pi}, 
we present the outcomes of our convergence analysis for the EJM schemes 
with a parameter setting of $\theta = \pi / 3$, 
employing both our efficient protocol and the direct protocol. 
To assess the variability in estimation, we repeat each protocol $30$ times.
The figures unmistakably demonstrate that the estimation outcomes for 
both protocols converge as the number of sampled Pauli operators increases. 
Notably, our efficient protocol exhibits significantly swifter convergence compared to the direct protocol.
Consequently, our efficient protocol not only enjoys a substantial advantage in terms of sample complexity, 
but also demonstrates superior or comparable convergence performance. 
Additional results from the convergence analysis can be 
found in Appendix~\ref{appx: more convergence analysis results}.

In Fig.~\ref{fig: histogram of the total number of samples for different theta of EJM}, 
we present a graphical representation detailing the total number of samples required across $1000$ 
uniformly sampled parameter values $\theta$ for the EJM measurement. 
The desired accuracy parameters were consistently set at $\varepsilon = 0.05$ and $\delta = 0.05$. 
The results are depicted in 
Fig.~\ref{fig: histogram of the total number of samples for different theta of EJM}, 
revealing that approximately $97.8\%$ of trials required fewer than five times the average total 
number of samples for successful estimation. 
This observation underscores the efficiency and robustness of our protocol 
across a diverse range of EJM parameter regions.

\begin{figure}[!hbtp]
	\centering
	\includegraphics[width=0.75\textwidth]{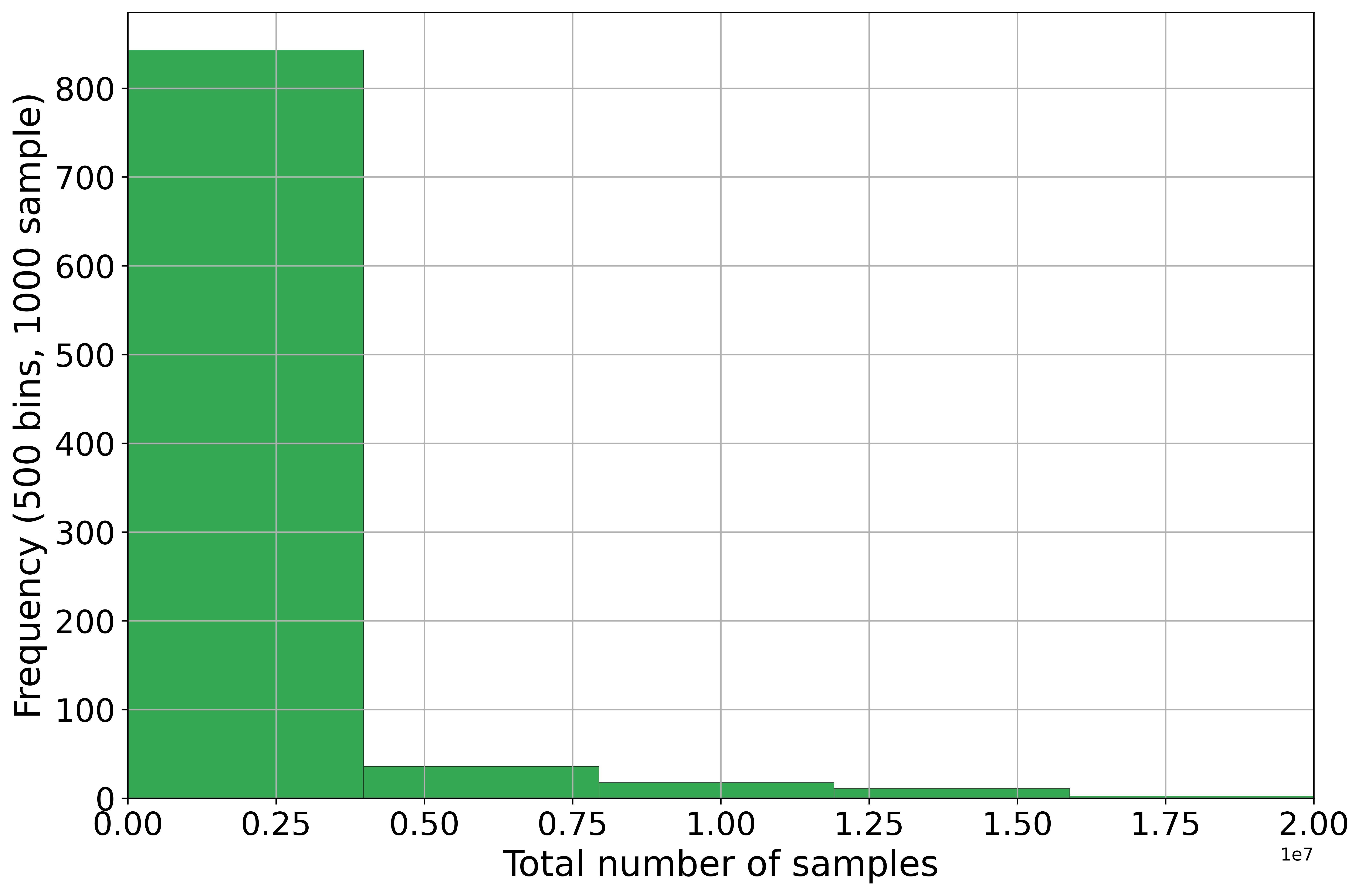}
	\caption{The histogram of the total number of samples for different $\theta$ of EJM measurement.}
	\label{fig: histogram of the total number of samples for different theta of EJM}
\end{figure}

To gain deeper insights into the hardness of the estimation process, we graphically represented the average sample complexity and the bounding results in terms of stabilizer R\'enyi entropy, as depicted in Fig. \ref{fig: magic bounds}. This visualization encompasses diverse parameter values of $\theta$ from Eq. \eqref{def: ejm base} in the context of EJM. Notably, the average sample complexity is 
consistently bounded by stabilizer R\'enyi entropies of the target state across all chosen EJM parameters. 
The results presented in Fig.~\ref{fig: magic bounds} also unveil that the sample complexities 
exhibit small variation with respect to the EJM parameters. 
Moreover, \textbf{Protocol \ref{ptl: efficient protocol}} achieves
a consistently reduced average sample complexity when juxtaposed with 
\textbf{Protocol \ref{ptl: original protocol}}.
However, the stabilizer R\'enyi entropy bounds of the direct \textbf{Protocol \ref{ptl: original protocol}} 
are much tighter than those of \textbf{Protocol \ref{ptl: efficient protocol}}, 
indicating that \textbf{Protocol \ref{ptl: efficient protocol}} might have a 
larger fluctuation in sample complexity.

\begin{figure}[!htbp]
	\centering
\includegraphics[width=0.8\textwidth]{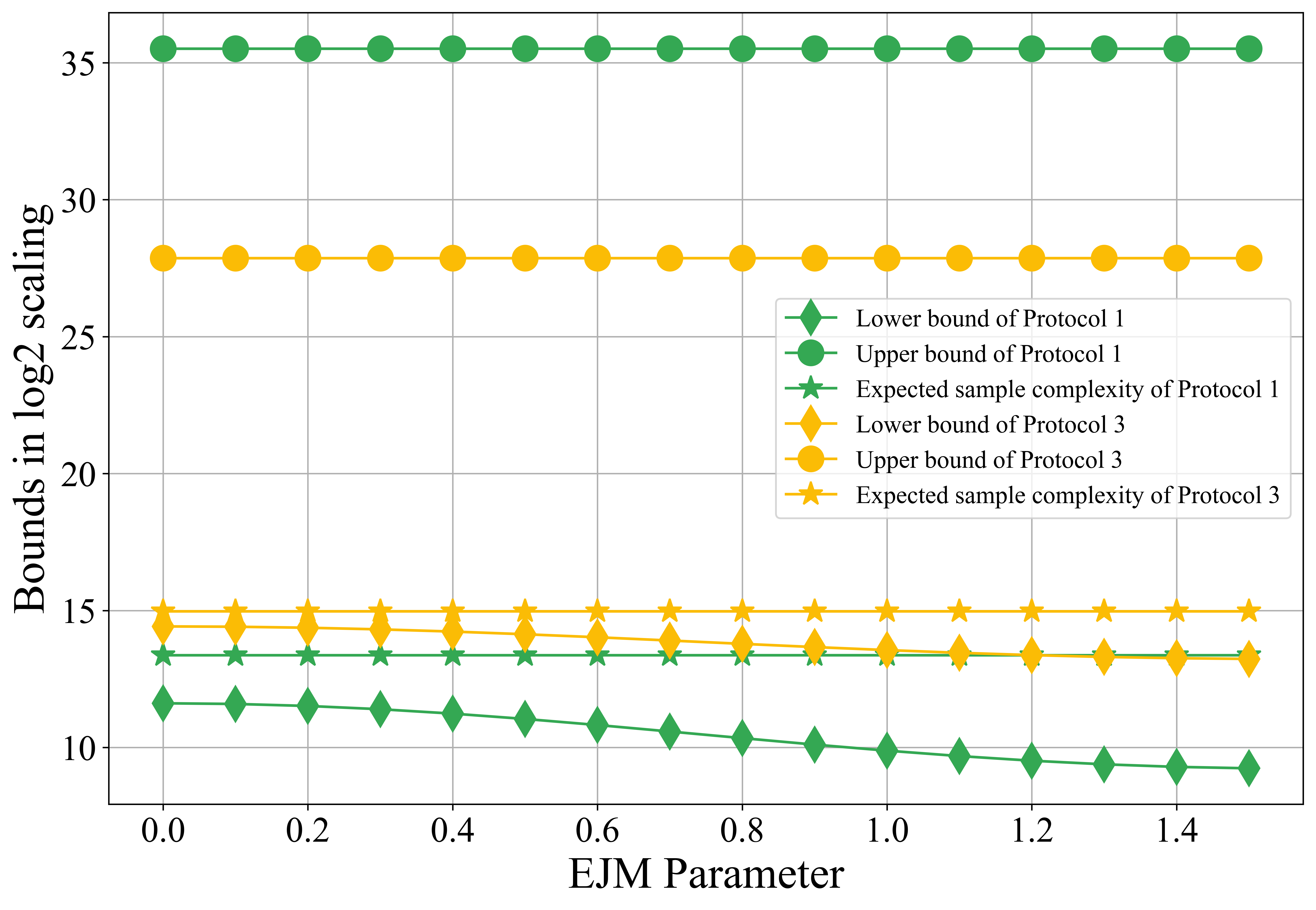}
	\caption{Comparisons of the expected sample complexity 
            and the stabilizer R\'enyi entropy bounds
            of \textbf{Protocol \ref{ptl: efficient protocol}} 
            and \textbf{Protocol \ref{ptl: original protocol}}.}
	\label{fig06}
 \label{fig: magic bounds}
\end{figure}

\section{Conclusions}

We have introduced an efficient protocol 
for estimating the performance of quantum measurements, 
leveraging the preparation and measurement of product eigenstates of Pauli operators. 
This protocol confers several advantages over existing methods, surpassing the speed of 
detector tomography and demonstrating an exponential acceleration compared to the 
direct fidelity estimation protocol for measurement devices. 
Furthermore, in specific scenarios, such as well-conditioned quantum measurements, 
our protocol demands only a constant number of measurement calls. 
This protocol is not only applicable to symmetric informationally complete POVMs 
and quantum measurements relying on mutually unbiased bases~\cite{renes2004symmetric,adamson2010improving,czartowski2020isoentangled}, 
but also the most general positive operator-valued measure measurements via Naimark's theorem. 
Its significance lies in its ability to facilitate the development of reliable quantum devices and 
propel continuous enhancements in their performances.
Our protocol opens up new possibilities 
for certifying quantum measurements to high accuracy, 
providing a valuable tool for advancing the field of quantum information processing 
and enabling the realization of reliable and high performance quantum measurement devices.

The quest for even more efficient fidelity estimation lies in navigating the intricacies of sophisticated sampling strategies and adaptive approaches. This path presents exciting challenges and opportunities to minimize variance and sample complexity, ultimately yielding sharper accuracy and broader applicability of quantum measurements.

\vspace{0.2in}
\textbf{Acknowledgements.}---This work was done when Z. S. was a research intern at Baidu Research.

\newcommand{\etalchar}[1]{$^{#1}$}




\setcounter{secnumdepth}{2}
\appendix
\widetext

\section{Proof of Theorem~\ref{thm: sc of efficient protocol}}\label{appx: the efficient protocol}

Before delving into the sample complexity analysis summarized in Theorem~\ref{thm: sc of efficient protocol}
(see also Algorithm \ref{ptl: efficient protocol}), 
we first present some essential lemmas.
\begin{lemma}
    \label{lemma: trace(AB)}
    Let $A$ and $B$ be two positive semidefinite operators, we have $\tr[AB] \geq 0$.
\end{lemma}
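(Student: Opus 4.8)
The plan is to exploit the existence of a positive semidefinite square root. Since $A$ is positive semidefinite, the spectral theorem guarantees a unique positive semidefinite operator $A^{1/2}$ satisfying $A^{1/2}A^{1/2} = A$; in particular $A^{1/2}$ is Hermitian. First I would invoke the cyclicity of the trace to rewrite
\begin{align}
    \tr[AB] = \tr[A^{1/2} A^{1/2} B] = \tr[A^{1/2} B A^{1/2}].
\end{align}
This manipulation is the conceptual crux: it repackages the product of two positive operators into a single operator of the congruence form $A^{1/2} B A^{1/2}$, whose positivity is transparent.

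Next I would verify that $C := A^{1/2} B A^{1/2}$ is itself positive semidefinite. For an arbitrary vector $\vert v \rangle$, setting $\vert w \rangle := A^{1/2} \vert v \rangle$ (which is well defined since $A^{1/2}$ is Hermitian), one computes
\begin{align}
    \langle v \vert C \vert v \rangle = \langle v \vert A^{1/2} B A^{1/2} \vert v \rangle = \langle w \vert B \vert w \rangle \geq 0,
\end{align}
where the final inequality holds precisely because $B$ is positive semidefinite. Hence $C \geq 0$.

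Finally I would conclude by recalling that the trace of a positive semidefinite operator equals the sum of its eigenvalues, all of which are nonnegative, so $\tr[C] \geq 0$. Combining this with the trace identity above yields $\tr[AB] = \tr[C] \geq 0$, as claimed. I do not anticipate any genuine obstacle here, as the result is elementary; the only point requiring mild care is ensuring that the square root $A^{1/2}$ is Hermitian so that the substitution $\vert w \rangle = A^{1/2}\vert v\rangle$ correctly produces the quadratic form $\langle w \vert B \vert w \rangle$, and this follows directly from the positive semidefiniteness of $A$.
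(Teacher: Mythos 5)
Your proof is correct, but it takes a genuinely different route from the paper's. The paper spectrally decomposes \emph{both} operators, $A = \sum_i \alpha_i \vert\phi_i\rangle\langle\phi_i\vert$ and $B = \sum_j \beta_j \vert\psi_j\rangle\langle\psi_j\vert$, and computes directly that
\begin{align}
\tr[AB] = \sum_{ij}\alpha_i\beta_j\vert\langle\phi_i\vert\psi_j\rangle\vert^2 \geq 0,
\end{align}
which exhibits the trace as an explicitly nonnegative sum. You instead decompose only $A$, via its positive square root, and use cyclicity to reduce the claim to the positivity of the congruence $A^{1/2}BA^{1/2}$ together with the fact that a positive semidefinite operator has nonnegative trace. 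Both arguments are elementary and complete; what yours buys is a slightly more structural statement (the product is conjugate, under the trace, to a manifestly positive operator), which adapts more readily to related facts such as monotonicity $A \leq A' \Rightarrow \tr[AB] \leq \tr[A'B]$, while the paper's version yields a closed-form expression for $\tr[AB]$ that makes the nonnegativity of each term visible at a glance. Your one point of care --- that $A^{1/2}$ is Hermitian so that $\langle v\vert A^{1/2} B A^{1/2}\vert v\rangle = \langle w\vert B\vert w\rangle$ with $\vert w\rangle = A^{1/2}\vert v\rangle$ --- is correctly identified and correctly resolved.
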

\begin{proof}
\label{pf: trace AB}
    By the spectral theorem, we have
    \begin{align}
        \tr[AB] 
        &= \tr\left[\sum_i \alpha_i \vert \phi_i \rangle \langle \phi_i \vert \sum_j \beta_j \vert \psi_j \rangle \langle \psi_j \vert \right] \\
        &= \sum_{ij} \alpha_i \beta_j \langle \phi_i \vert \psi_j \rangle \langle \psi_j \vert \phi_i \rangle \\
        &= \sum_{ij} \alpha_i \beta_j \vert  \langle \phi_i \vert \psi_j \rangle \vert^2 \\
        &\qquad\eqnote{Since $A,B$ are positive semidefinite, $\alpha_i \geq 0$ and $\beta_j \geq 0$} \nonumber \\
        &\geq 0.
    \end{align}
\end{proof}

\begin{lemma}
\label{lemma: trace(Vk square)}
    Let $\left\{V_k\right\}_k$ be a set of POVM elements satisfying $\sum_{k=1}^{2^n} V_k= \id$, where $n$ is the number of qubits. We then have
    \begin{align}
        \sum_{k=1}^{2^n} \tr[V_k^2]
        &\leq 2^n.
    \end{align}
\end{lemma}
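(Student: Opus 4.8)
The plan is to reduce the claim to the positivity fact already established in Lemma~\ref{lemma: trace(AB)}, applied to a well-chosen pair of operators. The key structural observation is that each POVM element satisfies $0 \leq V_k \leq \id$, so that both $V_k$ and its complement $\id - V_k$ are positive semidefinite. This is precisely the hypothesis needed to invoke the preceding lemma, and it is what lets us control $\tr[V_k^2]$ by the linear quantity $\tr[V_k]$.

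First I would fix an index $k$ and set $A = V_k$ and $B = \id - V_k$. Both are positive semidefinite: $V_k \geq 0$ by definition of a POVM element, and $\id - V_k \geq 0$ because $V_k \leq \id$. Applying Lemma~\ref{lemma: trace(AB)} then yields $\tr[V_k(\id - V_k)] \geq 0$, which expands to $\tr[V_k] - \tr[V_k^2] \geq 0$, i.e.
\begin{align}
    \tr[V_k^2] \leq \tr[V_k].
\end{align}
Intuitively this says that squaring an operator whose spectrum lies in $[0,1]$ can only shrink its trace.

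Next I would sum this inequality over all $k = 1, \dots, 2^n$ and use linearity of the trace together with the completeness relation $\sum_{k=1}^{2^n} V_k = \id$, giving
\begin{align}
    \sum_{k=1}^{2^n} \tr[V_k^2]
    \leq \sum_{k=1}^{2^n} \tr[V_k]
    = \tr\!\left[\sum_{k=1}^{2^n} V_k\right]
    = \tr[\id]
    = 2^n,
\end{align}
which is exactly the claimed bound. I do not anticipate a genuine obstacle here; the only point requiring care is correctly identifying $\id - V_k$ as positive semidefinite so that Lemma~\ref{lemma: trace(AB)} applies cleanly. Everything else is linearity of the trace and the normalization of the POVM.
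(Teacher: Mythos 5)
Your proof is correct, but it follows a genuinely different route from the paper's. The paper expands
$\sum_{k}\tr[V_k^2]=\tr\bigl[\bigl(\sum_k V_k\bigr)\bigl(\sum_j V_j\bigr)\bigr]-\sum_{k\neq j}\tr[V_kV_j]=\tr[\id]-\sum_{k\neq j}\tr[V_kV_j]$
and then applies Lemma~\ref{lemma: trace(AB)} to the \emph{cross terms} $V_kV_j$ with $k\neq j$ to conclude that the subtracted sum is nonnegative. You instead bound each diagonal term separately, applying Lemma~\ref{lemma: trace(AB)} to the pair $(V_k,\id-V_k)$ to obtain $\tr[V_k^2]\leq\tr[V_k]$, and then sum using completeness. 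Both arguments rest on the same two ingredients---the positivity lemma and $\sum_k V_k=\id$---but deploy them on different pairs of operators. Your version isolates the reusable fact $\tr[V_k^2]\leq\tr[V_k]$ for any effect $0\leq V_k\leq\id$, which is local and transparent; the paper's version nominally uses only $V_k\geq 0$ together with completeness and never explicitly invokes $V_k\leq\id$ (though that bound is anyway implied, since $\id-V_k=\sum_{j\neq k}V_j\geq 0$, so your extra hypothesis costs nothing). Both proofs are equally rigorous and of comparable length; either would serve in the paper.
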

\begin{proof}
\label{pf: trace Vk2}
\begin{align}
\tr\left[\sum_{k=1}^{2^n} V_k^2\right] 
    &= \tr\left[\sum_{k=1}^{2^n} V_k \sum_{j=1}^{2^n} V_j\right] -\tr\left[\sum_{k \neq j} V_k V_j\right] \\
    &= \tr[\id] - \tr\left[\sum_{k \neq j} V_k V_j \right] \\
    &\qquad\eqnote{By lemma (\ref{lemma: trace(AB)}), we have $\sum_{k \neq j} \tr[V_k V_j] \geq 0$} \nonumber \\
    &\leq 2^n. 
\end{align}
\end{proof}

\begin{proof}[Proof of Theorem \ref{thm: sc of efficient protocol}]
To commence, we begin with bounding the estimation error from the sample mean in step (11) of the efficient \textbf{Protocol \ref{ptl: efficient protocol}}. Recognizing that $X_{l}$ is an unbounded random variable, we employ Chebyshev's inequality to derive a tail bound for the estimation $Y$ in Eq.~\eqref{ey}. Employing the definitions of the $q_{l}$ and $X_{l}$, we analyze the variance of $X_{l}$ as
\begin{align}
    {\rm Var}[X_l] 
&= \sum_{l=1}^{4^n} \left(\sum_{k'=1}^{2^n}\frac{\tr[\psi_{k’}P_l]^2}{2^n}\right)\times 
   \left(\sum_{k=1}^{2^n}\tr[V_kP_l]\frac{\tr[\psi_kP_l]}{\sum_{k'=1}^{2^n}\tr[\psi_{k’}P_l]^2}\right)^2 - F^2 \\
&\leq \sum_{l=1}^{4^n} \frac{\sum_{k'=1}^{2^n}\tr[\psi_{k’}P_l]^2}{2^n}\times
        \frac{\left[\sum_{k=1}^{2^n}\tr[V_kP_l]\tr[\psi_kP_l]\right]^2}{\left[\sum_{k'=1}^{2^n}\tr[\psi_{k’}P_l]^2\right]^2} \\
&= \sum_{l=1}^{4^n} \frac{\left[\sum_{k=1}^{2^n}\tr[V_kP_l]\tr[\psi_kP_l]\right]^2}{2^n\left(\sum_{k'=1}^{2^n}\tr[\psi_{k’}P_l]^2\right)} \\
&\qquad\eqnote{Cauchy–Schwarz inequality}\nonumber\\
&\leq \sum_{l=1}^{4^n} \frac{\left(\sum_{k=1}^{2^n}\tr[V_kP_l]^2\right)\left(\sum_{k=1}^{2^n}\tr[\psi_kP_l]^2\right)}{2^n\left(\sum_{k'=1}^{2^n}\tr[\psi_{k’}P_l]^2\right)} \\
&= \frac{\sum_{l=1}^{4^n} \sum_{k=1}^{2^n}\tr[V_kP_l]^2}{2^n} \\
&= \frac{\sum_{k=1}^{2^n} \tr[V_k^2]}{2^n} \\
&\qquad\eqnote{By lemma (\ref{lemma: trace(Vk square)})} \\
&\leq 1.
\end{align}
Hence, it is  adequate to sample $m=\lceil1/(\varepsilon^2\delta)\rceil$ number of Pauli operators, and the estimation of $\cF$ can be expressed as
\begin{align}
    Y = \frac{1}{m}\sum_{i=1}^m X_{l_i}.
\end{align}
This formulation results in the $\varepsilon$-close estimator in Eq.~\eqref{estimation step 1} as guaranteed by Chebyshev's inequality.

Next, we bound the estimation error of $X_{l_i}$ in step (9) of the efficient \textbf{Protocol \ref{ptl: efficient protocol}}. We show that $\wh{X}_{a_j}^{(l_i)}$ is an unbiased estimator for a fixed $l_i$, since 
\begin{align}
    \bE[\wh{X}_{a_j}^{(l_i)}] 
    &= \bE\left[\frac{2^n}{\sum_{k'=1}^{2^n}\tr[\psi_{k’}P_{l_i}]^2}\times\lambda^{(l_i)}_{a_j}\times \delta_{k o_j} \tr[\psi_{o_j}P_{l_i}]\right] \\
    &= \bE\left[\frac{2^n}{\sum_{k'=1}^{2^n}\tr[\psi_{k’}P_{l_i}]^2}\times\lambda^{(l_i)}_{a_j}\times \sum_{k=1}^{2^n} \delta_{k o_j} \tr[\psi_{k}P_{l_i}]\right] \\
    &= \bE_{a_j}\left[\frac{2^n}{\sum_{k'=1}^{2^n}\tr[\psi_{k’}P_{l_i}]^2}\times\lambda^{(l_i)}_{a_j}\times \sum_{k=1}^{2^n} \bE[\delta_{k o_j}] \tr[\psi_{k}P_{l_i}]\right]\\
    &= \bE_{a_j}\left[\frac{2^n}{\sum_{k'=1}^{2^n}\tr[\psi_{k’}P_{l_i}]^2}\times\lambda^{(l_i)}_{a_j}\times \sum_{k=1}^{2^n} \tr[V_{k} \phi_{a_j}^{(l_i)}]\tr[\psi_{k}P_{l_i}]\right] \\ 
    &=\sum_{a_j=1}^{2^n} \frac{1}{2^n} \times \frac{2^n}{\sum_{k'=1}^{2^n}\tr[\psi_{k’}P_{l_i}]^2}\times\lambda^{(l_i)}_{a_j} \sum_{k=1}^{2^n} \tr[V_{k} \phi_{a_j}^{(l_i)}]\tr[\psi_{k}P_{l_i}] \\ 
    &= \frac{1}{\sum_{k'=1}^{2^n}\tr[\psi_{k’}P_{l_i}]^2} \sum_{k=1}^{2^n} \tr\left[V_{k} P_{l_i} \right]\tr[\psi_{k}P_{l_i}] \\
    &\equiv X_{l_i},
\end{align}
where $\delta_{ko_j}=1$ if $k=o_j$ and $\delta_{ko_j}=0$ otherwise. Run step (6) and step (7) in the efficient \textbf{Protocol \ref{ptl: efficient protocol}} a total number of $n_{l_i}$ times, the estimation of $X_{l_i}$ is
\begin{align}
    \wt{X}_{l_i}
    &= \frac{1}{n_{l_i}} \sum_{j=1}^{n_{l_i}} \wh{X}_{a_j}^{(l_i)} = \frac{1}{n_{l_i}} \sum_{j=1}^{n_{l_i}} \frac{2^n}{\sum_{k'=1}^{2^n}\tr[\psi_{k’}P_{l_i}]^2}\times\lambda^{(l_i)}_{a_j}\times\delta_{k o_j}\times\tr[\psi_{o_j}P_{l_i}].
\end{align}
Then, we consider the double sum
\begin{align}
    m \wt{Y} 
= \sum_{i=1}^m \wt{X}_{l_i}
= \sum_{i=1}^m\sum_{j=1}^{n_{l_i}}
    \underbrace{\frac{1}{n_{l_i}}\frac{2^n}{\sum_{k'=1}^{2^n}\tr[\psi_{k’}P_{l_i}]^2}\times\lambda^{(l_i)}_{a_j}\times\delta_{k o_j}\times\tr[\psi_{o_j}P_{l_i}]}_{C_{ij}}.
\end{align}
Since $\lambda^{(l_i)}_{a_j}=\pm1/\sqrt{2^n}$, $\delta_{ko_j}=0$ when $k \neq o_j$ and $1$ otherwise, and {$\vert \tr[\psi_{o_j}P_{l_i}]\vert \leq 1/\sqrt{2^n}$}, for a fixed $l_i$, the upper bound $b_{l_i}$ and the lower bound $a_{l_i}$ of $C_{ij}$ is
\begin{align}
   b_{l_i}= -a_{l_i} = \frac{1}{n_{l_i}}\frac{1}{\sum_{k'=1}^{2^n}\tr[\psi_{k’}P_{l_i}]^2}.
\end{align}
We use the Hoeffding's inequality to bound the estimation error as:
\begin{align}
    {\rm Pr}(\vert\wt{Y}-Y\vert\geq\varepsilon) = {\rm Pr}(\vert m \wt{Y} - mY\vert\geq\varepsilon)\leq 2\exp(-2 m^2\varepsilon^2/C),
\end{align}
where $C = \sum_{i=1}^m\sum_{j=1}^{n_{l_i}}(2/(n_{l_i} \sum_{k^\prime=1}^{2^n} \tr[\psi_{k^\prime}P_{l_i}]^2))^2$. Now choose
\begin{align}
    n_{l_i} 
= \left\lceil\frac{2}{m\left(\sum_{k'=1}^{2^n}\tr[\psi_{k’}P_{l_i}]^2\right)^2\varepsilon^2}\ln\frac{2}{\delta} \right\rceil
= \left\lceil\frac{1}{(\sum_{k'=1}^{2^n}\tr[\psi_{k’}P_{l_i}]^2)^2}\times\frac{2}{m\varepsilon^2}\ln\frac{2}{\delta} \right\rceil.
\end{align}
We have
\begin{align}
\frac{1}{m^2} C &= \sum_{i=1}^m\sum_{j=1}^{n_{l_i}} 4\frac{1}{m^2n_{l_i}^2}
            \frac{1}{(\sum_{k'=1}^{2^n}\tr[\psi_{k’}P_{l_i}]^2)^2} \\
  &= \sum_{l=1}^m 4 \frac{1}{m^2n_{l_i}}\frac{1}{(\sum_{k'=1}^{2^n}\tr[\psi_{k’}P_{l_i}]^2)^2} \\
  &\leq \sum_{i=1}^m \frac{1}{m} \frac{2\varepsilon^2}{\ln(2/\delta)} \\
  &= \frac{2\varepsilon^2}{\ln(2/\delta)},
\end{align}
which guarantees the confidence interval \eqref{estimation step 2}. 
By combination of Eqs.~\eqref{estimation step 1} and \eqref{estimation step 2}, 
we arrive at the final estimator $\wt{Y}$ satisfying the confidence 
interval~\eqref{est: eps-close est for efficient}.

To determine the expected number of times the measurement device needs to be accessed, 
it is crucial to acknowledge that $n_{l}$ is a random variable, as $l$ is selected randomly. 
By the definition of sampling, for a fixed $l$, we have
\begin{align}
   \bE[n_l] 
&= \sum_{l=1}^{\vert \cS \vert}q_l n_l \\
&\geq \sum_{l=1}^{\vert \cS \vert}\frac{\sum_{k'=1}^{2^n}\tr[\psi_{k’}P_l]^2}{2^n}\times
        \left(\frac{2}{m\left(\sum_{k'=1}^{2^n}\tr[\psi_{k’}P_l]^2\right)^2\varepsilon^2}
        \log\frac{2}{\delta}\right) \\
&=  \left(\sum_{l=1}^{\vert \cS \vert}\frac{1}{2^n 
    \sum_{k'=1}^{2^n}\tr[\psi_{k’}P_l]^2}\right)\times\frac{2}{m\varepsilon^2}\log\frac{2}{\delta}.
\end{align}
Due to the HM-GM-AM-QM inequalities, we have
\begin{align}
    {\left(\sum_{l=1}^{\vert \cS \vert}\frac{1}{2^n \sum_{k'=1}^{2^n}\tr[\psi_{k’}P_l]^2}\right)}
\geq \frac{\vert \cS \vert^2}{2^n \sum_{l=1}^{\vert \cS \vert}{\sum_{k'=1}^{2^n}\tr[\psi_{k’}P_l]^2}}
= \frac{\vert \cS \vert^2}{4^n}.
\end{align}
Thus, we get
\begin{align}
    \mathbb{E}[L] = \sum_{i=1}^m \bE[n_{l_i}]
    \geq {\frac{\vert \cS \vert^2}{4^n} }\frac{2}{\varepsilon^2}\ln\frac{2}{\delta}.
\end{align}
It is noteworthy that the maximum cardinality of $\vert \cS \vert$ is $4^n$, 
signifying that the worst-case lower bound scales as $\Omega(4^n)$. 
\end{proof}

\section{Proof of Theorem~\ref{thm: magic bounds for new method}}
\label{appx: magic bounds for efficient protocol}

\begin{proof}[Proof of Theorem~\ref{thm: magic bounds for new method}]
Recall the estimator \eqref{wtY-ptl-1} employed in the efficient protocol. Setting $n_{l_i} = 1$, we obtain
\begin{align}
    \wt{Y} 
    &= \frac{1}{L} \sum_{i=1}^L
    \frac{2^n}{\sum_{k'=1}^{2^n}\tr[\psi_{k’}P_{l_i}]^2}\times\lambda^{(l_i)}_{a_j} \times \delta_{ko_j}\times\tr[\psi_{o_j}P_{l_i}] \equiv \frac{1}{L} \sum_{i=1}^{L} C_{l_i}.
\end{align}
The upper bound $b_{l_i}$ and the lower bound $a_{l_i}$ of $C_{l_i}$ satisfy $b_{l_i} = - a_{l_i} = 1/\sum_{k^{\prime}=1}^{2^n} \tr[\psi_{k^{\prime}} P_{l_i}]^2$. 
Then, applying Hoeffding's inequality, we obtain
\begin{align}
    {\rm Pr}\left(\left\vert \wt{Y} - \cF \right\vert \geq \varepsilon\right)
    &= {\rm Pr}\left(\left\vert \sum_{i=1}^L C_{l_i} - L \cF \right\vert \geq L \varepsilon\right) \\
    &\leq 2\exp\left[- \frac{L^2 \varepsilon^2}{\sum_{i=1}^L \frac{2}{\left[\sum_{k=1}^{2^n} \tr[\psi_k P_{l_i}]^2\right]^2}}\right] \\
    &\leq 2 \exp\left[-\frac{L^2 \varepsilon^2}{\sum_{i=1}^L \frac{2}{\mathbb{M}^2}}\right] \\
    &\leq \delta,
\end{align}
where 
\begin{align}
    \mathbb{M}
    &:= \min_l \sum_{k=1}^{2^n} \tr[\psi_k P_{l}]^2
    = \min_l  \tr[O P_{l}]^2 \cdot 2^n 
    \leq \sqrt{\frac{\exp[-M_2(\cM)]}{2^n}}^2 \cdot 2^n
    ={\exp\left[- M_2(\cM)\right]}
\end{align}
by the definition of $O$ in Eq.~\eqref{def: o} and $M_{\alpha}(\cM)$ in Eq.~\eqref{def: alpha sre}. Thus, we get the lower bound as
\begin{align}
    L
    \geq \frac{1}{\mathbb{M}^2} \cdot \frac{2}{\varepsilon^2} \ln\frac{2}{\delta}
    \geq  \frac{2}{\varepsilon^2} \ln\frac{2}{\delta} \exp\left[2 M_2(\cM)\right].
\end{align}

Now, let's analyze the upper bound. Initially, we truncate the PVM elements $\psi_k$ as defined in Eq.~\eqref{def: truncated psi_k}, resulting in $\psi_k^{\prime}$. Subsequently, we employ these truncated PVM elements $\psi_k^\prime$ to approximate $\cF$. The absolute difference is bounded as $\vert \cF' - \cF \vert < {\varepsilon}/{2}$. The estimation $\wt{Y}^{\prime}$ of $\cF^{\prime}$ is given by
\begin{align}
    \wt{Y}^{\prime} 
    &= \frac{1}{L} \sum_{i=1}^L
    \frac{2^n}{\sum_{k'=1}^{2^n}\tr[\psi_{k^{\prime}}^{\prime}P_{l_i}]^2}\times\lambda^{(l_i)}_{a_i} \times \delta_{ko_i}\times\tr[\psi_{o_i}P_{l_i}] \equiv \frac{1}{L} \sum_{i=1}^{L} C_{l_i}^{\prime}.
\end{align}
The upper bound $b_{l_i}^\prime$ and the lower bound $a_{l_i}^\prime$ of $C_{l_i}^\prime$ satisfy $b_{l_i}^\prime = - a_{l_i}^\prime = 1/\sum_{k^{\prime}=1}^{2^n} \tr[\psi_{k^{\prime}}^{\prime} P_{l_i}]^2$.
It can be demonstrated that $\bE[C_{l_i}^\prime]=\cF$.
Subsequently, by applying Hoeffding's inequality, we obtain 
\begin{align}
    {\rm Pr}\left(\vert \wt{Y}^{\prime} - \cF^{\prime} \vert \leq \frac{\varepsilon}{2}\right)
&= {\rm Pr}\left(\left\vert  \sum_{i=1}^L C_{l_i}^{\prime} - L \cF^{\prime} \right\vert 
    \geq \frac{L \varepsilon}{2}\right) \\
&\geq 1 - 2 \exp\left[-\frac{L^2 \varepsilon^2/4}{\sum_{i=1}^L (b_{l_i}^\prime - a_{l_i}^\prime)^2}\right] \\
&\geq 1 -  2\exp\left[-\frac{L^2 \varepsilon^2}{\sum_{i=1}^L 
    \frac{16}{\mathbb{M}^{\prime 2}}}\right] \\
&= 1 - \delta,
\end{align}
where $\mathbb{M}^{\prime}:= \min_l \sum_{k=1}^{2^n} \tr[\psi_k^\prime P_l]^2$. This leads to
\begin{align}
    L = \frac{16 }{\mathbb{M}^{\prime 2}} \cdot \frac{1}{\varepsilon^2} \ln\frac{2}{\delta}.
\end{align}

Now, we derive:
\begin{align}
    \mathbb{M}^{\prime}
    = \min_{l} \sum_{k=1}^{2^n} \tr[\psi_k^\prime P_l]^2 
    &\geq \min_{l} \sum_{k=1}^{2^n} \tr[\psi_{k, cut} P_l]^2 \\
    &\geq \min_{l} \sum_{k=1}^{2^n} \left[\frac{\varepsilon}{2\sqrt{2}} 
            \sqrt{\frac{\exp[-M_0(\psi_k)]}{2^n}}\right]^2 \\
    &= \sum_{k=1}^{2^n} \frac{\varepsilon^2}{8} \frac{\exp[-M_0(\psi_k)]}{2^n} \\
    &\geq \sum_{k=1}^{2^n} \frac{\varepsilon^2}{8} \frac{\exp[-M_0(\cM)]}{2^n} \\
    &=\frac{\varepsilon^2}{8} {\exp[- M_0(\cM)]}, 
\end{align}
Finally, the upper bound is obtained as
\begin{align}
   L = \frac{16 }{\mathbb{M}^{\prime 2}} \cdot \frac{1}{\varepsilon^2} \ln\frac{2}{\delta} \leq \frac{2^{10}}{\varepsilon^6}\ln\frac{2}{\delta} \exp[2M_0(\cM)],
\end{align}
Substituting $\varepsilon$ and $\delta$ with $2 \varepsilon$ and $2 \delta$, respectively, yields Theorem \ref{thm: magic bounds for new method}, presenting the bounding results for a $2 \varepsilon$-close estimator.
\end{proof}

\section{Proof of Theorem~\ref{thm: optimal lower bound for measurement fidelity}}
\label{appx: naive protocol}

\begin{proof}[Proof of Theorem~\ref{thm: optimal lower bound for measurement fidelity}]
It follows directly from Eq.~\eqref{eq:global-fidelity} that $\bE[\wt{X}_{k_i}]=\cF$.
We also note that $\wt{X}_{k_i}\in\{0,1\}$. 
Thus, by Hoeffding's inequality, we get
\begin{align}
{\rm Pr}\left(\vert \wt{Y} - \cF \vert \geq \varepsilon \right)
&= {\rm Pr}\left(\left\vert \sum_{i=1}^{L} \wt{X}_{k_i} - L \cF \right\vert \geq L \varepsilon \right) \\
&\leq 2 \exp\left[-\frac{2 L^2 \varepsilon^2}{\sum_{i=1}^L (1 - 0)^2}\right] \\
&= 2 \exp\left[-2 L \varepsilon^2 \right] \\
&\leq \delta,
\end{align}
which implies that 
\begin{align}
L \geq \frac{1}{2 \varepsilon^2} \ln\frac{2}{\delta}.
\end{align}
This leads to a $\varepsilon$-close estimator 
\begin{align}
  {\rm Pr}\left(\vert \wt{Y} - \cF \vert \geq \varepsilon \right) \leq \delta.
\end{align}
By substituting $\varepsilon$ and $\delta$ with $2 \varepsilon$ and $2 \delta$, respectively, 
we conclude Theorem \ref{thm: optimal lower bound for measurement fidelity}.
\end{proof}

\section{Details of the direct protocol}
\label{appx: the original protocol}

Now we proceed to formulate the direct protocol. First, we rewrite the measurement fidelity $\cF$ as
\begin{align}
    \cF
    &= \frac{1}{2^n} \sum_{k=1}^{2^n} \tr[\psi_k V_k] \\
    &= \frac{1}{2^n} \sum_{k=1}^{2^n} \sum_{l=1}^{4^n} \tr[\psi_k P_{l}] \tr[V_k P_{l}] \\
    &= \sum_{k=1}^{2^n} \sum_{l=1}^{4^n} \frac{\tr[\psi_k P_{l}]^2}{2^n} \frac{\tr[V_k P_{l}]}{\tr[\psi_k P_{l}]} \\
    &\equiv \sum_{k=1}^{2^n} \sum_{l=1}^{4^n} q_{kl} X_{kl},
\end{align}
where $q_{kl}:= {\tr[\psi_k P_{l}]^2}/{2^n}$ forms a joint probability mass function and $X_{kl}:= {\tr[V_k P_l]}/{\tr[\psi_k P_{l}]}$ denotes a random variable. It can be demonstrated that $\bE[X_{kl}] = \cF$, indicating that $X_{kl}$ serves as an unbiased estimator for $\cF$. 
Consequently, we can formulate an estimator for $\cF$ utilizing the sample mean. We perform joint sampling $k$ and $l$ multiple times according to the probability distribution $q_{kl}$, resulting in $m$ pairs $(k_1, l_1), (k_2, l_2), ..., (k_m, l_m)$. We choose $m$ as
\begin{align}
    m = \left\lceil 1/(\varepsilon^2 \delta) \right\rceil.
    \label{m}
\end{align}
Then, we estimate $\cF$ using the following formula:
\begin{align}
\label{eq: y}
    Y = \frac{1}{m}\sum_{i=1}^m X_{k_i l_i}.
\end{align}
This leads to a $\varepsilon$-accurate estimator as
\begin{align}
    {\rm Pr}\left(\vert Y - \cF \vert \geq \varepsilon \right) \leq \delta.
    \label{oest1}
\end{align} 

Subsequently, we proceed to estimate $X_{k_i l_i}$ using the measurement outcomes. 
Given that $P_{l}$ is not a quantum state, we must execute the spectral decomposition
\begin{align}
P_{l} 
= \sum_{a=1}^{2^n}\lambda_{a}^{(l)} {\vert \phi_{a}^{(l)} \rangle \langle \phi_{a}^{(l)} \vert} 
\equiv \sum_{a=1}^{2^n}\lambda_{a}^{(l)} \phi_{a}^{(l)}.
\end{align}
Rewrite $X_{k_i l_i}$ as 
\begin{align}
    X_{k_i l_i}
    &= \sum_{a=1}^{2^n}\lambda_{a}^{(l_i)} \frac{\tr[V_{k_i} \phi_{a}^{(l_i)}]}{\tr[\psi_{k_i} P_{l_i}]}.
\end{align}
Next, $\tr[V_{k_i} \phi_{a}^{(l_i)}]$ can be estimated 
by performing measurements on $\phi_{a}^{(l_i)}$ multiple times. 
The subroutine for this process is outlined below: 
\begin{itemize}
    \item \textit{Sample an index.} Choose a random index $a_j$, where $a_j \in \{1,2,\cdots,2^n\}$ uniformly;
    \item \textit{Measure the eigenstate.}  Utilize the quantum measurement device to measure the eigenstate $\phi^{(l_i)}_{a_j}$ (associated with $P_{l_i}$) and record the experimental outcome $o_j$.
\end{itemize}

Define a new random variable $\wh{X}_{k_i l_i}$ as
\begin{align}
    \wh{X}_{k_i l_i} = \frac{2^n}{\tr[\psi_{k_i} P_{l_i}]} \times \lambda_{a_j}^{(l_i)} \times \delta_{k_i o_j},
\end{align}
and it can be demonstrated that $\bE[\wh{X}_{k_i l_i}]$ is an unbiased estimator as follows:
\begin{align}
    \bE[\widehat{X}_{k_i l_i}] 
    &= \bE\left[\frac{2^n}{\tr[\psi_{k_i} P_{l_i}]} \times \lambda_{a_j}^{(l_i)} \times \delta_{k_i o_j}\right] \\
    &= \bE_{a_j}\left[ \frac{2^n}{\tr[\psi_{k_i} P_{l_i}]} \times \lambda_{a_j}^{(l_i)} \times \bE[\delta_{k_i o_j}]\right]\\
    &= \bE_{a_j}\left[ \frac{2^n}{\tr[\psi_{k_i} P_{l_i}]} \times \lambda_{a_j}^{(l_i)} \times \tr[V_{k_i} \phi_{a_j}^{(l_i)}]\right] \\ 
    &=\sum_{a_j=1}^{2^n} \frac{1}{2^n} \times \frac{2^n}{\tr[\psi_{k_i} P_{l_i}]}  \lambda_{a_j}^{(l_i)} \tr[V_{k_i} \phi_{a_j}^{(l_i)}] \\ 
    &=  \frac{\tr[V_{k_i} P_{l_i}]}{\tr[\psi_{k_i} P_{l_i}]} \equiv X_{k_i l_i}
\end{align}
By running this procedure a total number of $n_{l_i}$ times, we can estimate $X_{k_i l_i}$ as
\begin{align}
\label{eq: wtX k_i l_i}
    \wt{X}_{k_i l_i} = \frac{1}{n_{l_i}} \sum_{j=1}^{n_{l_i}} \wh{X}_{k_i l_i}  
    =\frac{1}{n_i} \sum_{j=1}^{n_i} \frac{2^n}{\tr[\psi_{k_i} P_{l_i}]} \times \lambda_{a_j}^{(l_i)} \times \delta_{k_i o_j},
\end{align}
where Our estimator of $Y$ is 
\begin{align}
\label{eq: ptl o: wtY}
    \wt{Y}
    &= \frac{1}{m} \sum_{i=1}^m \wt{X}_{k_i l_i} \\
    &= \sum_{i=1}^{m} \sum_{j=1}^{n_{l_i}} \frac{1}{m n_{l_i}} \frac{2^n}{\tr[\psi_{k_i} P_{l_i}]} \times \lambda_{a_j}^{(l_i)} \times \delta_{k_i o_j},
\end{align}
where $\delta_{k_io_j}=1$ if $k_i=o_j$ and $\delta_{k_io_j}=0$ otherwise. To satisfy the confidence interval
\begin{align}
    {\rm Pr}\left(\vert \wt{Y} - Y \vert \geq \varepsilon \right) \leq \delta,
    \label{oest2}
\end{align}
we choose
\begin{align}
    n_{l_i}
    &= \left\lceil \frac{1}{\tr[\psi_{k_i} P_{l_i}]^2} \frac{2 \cdot 2^n}{ m \varepsilon^2} \ln{\frac{2}{\delta}}\right\rceil,
\end{align}

By combining the confidence intervals \eqref{oest1} and \eqref{oest2} using the union bound, we deduce the confidence interval for the final estimation $\wt{Y}$ as
\begin{align}
    {\rm Pr}\left\{\vert \wt{Y} - \cF \vert \geq 2 \varepsilon \right\} \leq 2 \delta.
    \label{est: eps-close est for original}
\end{align}

\section{Proof of Theorem~\ref{thm: sc of original protocol}}
\label{appx:sc of original protocol}

\begin{proof}[Proof of Theorem~\ref{thm: sc of original protocol}]
\label{pf: sample lower bound in pauli basis}
First, we bound the estimation error from the sample mean in step (11) of the 
direct \textbf{Protocol \ref{ptl: original protocol}}. 
We note that $X_{kl}$ is an unbounded random variable; thus we employ Chebyshev's inequality \cite{kliesch2021theory} to derive a tail bound for the estimation $Y$ in Eq.~\eqref{eq: y}. 
Using the definitions of the $q_{kl}$ and $X_{kl}$, we investigate the variance of $X_{kl}$ and obtain
\begin{align}
    {\rm Var}(X_{k l})
    &= \mathbb{E}\left[X_{k l}^2\right] - \mathbb{E}\left[X_{k l}\right]^2 \\
    &= \sum_{k=1}^{2^n} \sum_{l=1}^{4^n} \frac{\tr[\psi_{k} P_{l}]^2}{2^n} \times \frac{\tr[V_{k} P_{l}]^2}{\tr[\psi_{k}  P_{l}]^2}
    - \left[\sum_{k=1}^{2^n} \sum_{l=1}^{4^n} \frac{\tr[\psi_{k} P_{l}]^2}{2^n} \times \frac{\tr[V_{k} P_{l}]}{\tr[\psi_{k} P_{l}]} \right]^2 \\
    &= \sum_{k=1}^{2^n} \sum_{l=1}^{4^n}  \frac{\tr[V_{k} P_{l}]^2}{2^n} - \cF^2 \\
    &\leq \sum_{k=1}^{2^n} \frac{\sum_{l=1}^{4^n} \tr[V_{k} P_{l}]^2}{2^n} \\
    &= \sum_{k=1}^{2^n} \frac{\tr[V_{k}^2]}{2^n}\\
    &\qquad\eqnote{By lemma (\ref{lemma: trace(Vk square)})} \nonumber \\
    &\leq 1.
\end{align}
Therefore, we can bound the the variance of $Y$ as
\begin{align}
    {\rm Var}(Y)
    &= \frac{1}{m^2} \sum_{i=1}^m {\rm Var}(X_{k_i l_i}) \in \left[0, \frac{1}{m}\right].
\end{align}
Then, by Chebyshev's inequality, we have
\begin{align}
    {\rm Pr}\left(\vert Y - \cF \vert \geq \varepsilon \right) \leq \frac{{\rm Var}(Y)}{\varepsilon^2} \leq \frac{1}{\varepsilon^2 m}.
\end{align}
Let $m= {1}/{\lceil\varepsilon^2 \delta\rceil}$, we have Eq.~\eqref{oest1}.
Now we bound the estimation error of $X_{k_i l_i}$ in step (9) of the direct 
\textbf{Protocol \ref{ptl: original protocol}}. 
We have shown that $\wh{X}_{k_i l_i}$ is an unbiased estimator; 
thus we we use the empirical mean \eqref{eq: wtX k_i l_i} to estimate $X_{k_i l_i}$.
Then we consider the double sum as
\begin{align}
    m \wt{Y} 
    &= \sum_{i=1}^{m} \wt{X}_{k_i l_i} = \sum_{i=1}^{m} \sum_{j=1}^{n_{l_i}} C_{k_i l_i o_j}, 
\end{align}
where 
\begin{align}
    C_{k_i l_i o_j} = \frac{2^n}{n_{l_i} \tr[\psi_{k_i} P_{l_i}]} \times \lambda_{a_j}^{(l_i)} \times \delta_{k_i o_j}.
\end{align}
Note that for a fixed pair $(k_i, l_i)$, the upper bound $b_{k_i l_i}$ and lower bound $a_{k_i l_i}$ of $C_{k_i l_i o_j}$ is
\begin{align}
    b_{k_i l_i} = -a_{k_i l_i} = \frac{\sqrt{2^n}}{n_{l_i} \tr[\psi_{k_i} P_{l_i}]}.
\end{align}
By Hoeffding's inequality \cite{kliesch2021theory}, we have
\begin{align}
  {\rm Pr}(\vert \wt{Y} - Y \vert \geq \varepsilon) 
  &= {\rm Pr}\left(\left\vert \sum_{i=1}^m \sum_{j=1}^{n_{k_i l_i}} C_{k_i l_i o_j} - mY \right\vert \geq m \varepsilon \right) \\
  &\leq 2 \exp\left[-\frac{2 m^2 \varepsilon^2}{\sum_{i=1}^m \sum_{j=1}^{n_{k_i l_i}} \left(\frac{2 \cdot \sqrt{2^n}}{n_{k_i l_i} \tr[\psi_{k_i} P_{l_i}]}\right)^2}\right] \\
  &\leq 2 \exp\left[-\frac{m^2 \varepsilon^2}{\sum_{i=1}^m \sum_{j=1}^{n_{k_i l_i}} \frac{2 \cdot {2^n}}{n_{k_i l_i}^2 \tr[\psi_{k_i} P_{l_i}]^2}}\right] \\
  &\leq \delta,
\end{align}
where $n_{k_i l_i}$ is chosen as
\begin{align}
\label{eq: original protocol n_i}
n_{k_i l_i} &= \left\lceil \frac{1}{\tr[\psi_{k_i} P_{l_i}]^2} \frac{2 \cdot 2^n}{ m \varepsilon^2} \ln{\frac{2}{\delta}}\right\rceil
\end{align}
to guarantee Eq.~\eqref{oest2}.

By combining confidence intervals \eqref{oest1} and \eqref{oest2} using the union bound, 
we establish the confidence interval for the final estimation $\wt{Y}$ 
concluded in Eq.~\eqref{est: eps-close est for original}.

To derive the expected value of the number of calls to the measurement device, 
we note that $n_{kl}$ is a random variable, since $(k, l)$ is randomly chosen. 
By definition, we have
\begin{align}
    \mathbb{E}[n_{k_l}] 
    &= \sum_{i=1}^{\vert \cT \vert} q_{k_i l_i} n_{k_il_i} \\
    & \leq 1+ \sum_{(k_i, l_i)=1}^{\vert \cT \vert} \frac{\tr[\psi_{k_i} P_{l_i}]^2}{2^n} \times \frac{1}{\tr[\psi_{k_i} P_{l_i}]^2} \frac{2 \cdot 2^n}{ m \varepsilon^2} \ln\frac{2}{\delta} \\
    &= 1+ \sum_{(k_i, l_i)=1}^{\vert \cT \vert} \frac{2}{m \varepsilon^2} \ln\frac{2}{\delta} \\
    &= 1 + \frac{2 \vert \cT \vert}{m \varepsilon^2} \ln\frac{2}{\delta},
\end{align}
where $\cT$ is defined in Eq.~\eqref{set T}. Therefore, the expected value of the total number of calls $L = \sum_{i=1}^m n_{k_i l_i}$ is
\begin{align}
  \mathbb{E}[L]
  &= \sum_{i=1}^m \mathbb{E}[n_{k_i l_i}] \\
  &\leq \sum_{i=1}^m \left[1 + \frac{2 \vert \cT \vert}{m \varepsilon^2} \ln\frac{2}{\delta}\right] \\
  &\leq 1 + \frac{1}{\varepsilon^2 \delta} + 2 \vert \cT \vert \frac{1}{\varepsilon^2} \ln\frac{2}{\delta}.
\end{align}
\end{proof}

\section{Proof of Theorem~\ref{thm: magic bounds for original protocol}}
\label{appx: magic bounds for original protocol}

\begin{proof}[Proof of Theorem~\ref{thm: magic bounds for original protocol}]
    Recall the estimator \eqref{eq: ptl o: wtY} in the direct \textbf{Protocol \ref{ptl: original protocol}}. Here, we set $n_{k_i l_i} = 1$ and have
    \begin{align}
        \wt{Y}
        &= \frac{1}{L} \sum_{i=1}^{L} \frac{2^n}{\tr[\psi_{k_i} P_{l_i}]} \times \lambda_{a_i}^{l_i} \times \delta_{k_i o_i} \equiv \frac{1}{m} \sum_{i=1}^{m} C_{k_i l_i o_i},
    \end{align}
    and the upper bound $b_{k_i l_i}$ and the lower bound $a_{k_i l_i}$ of $C_{k_i l_i o_i}$ is $b_{k_i l_i} = - a_{k_i l_i} = \sqrt{2^n}/\tr[\psi_{k_i} P_{l_i}]$. Then by Hoeffding's inequality,  we have
    \begin{align}
      {\rm Pr}(\vert \wt{Y} - Y \vert \geq \varepsilon) 
      &= {\rm Pr}\left(\left\vert \sum_{i=1}^L \frac{2^n}{\tr[\psi_{k_i} P_{l_i}]} \times \lambda_{a_i}^{(l_i)} \times \delta_{k_i o_i} - LY \right\vert \geq L \varepsilon \right) \\
      &\leq 2 \exp\left[-\frac{2 L^2 \varepsilon^2}{\sum_{i=1}^L (b_{k_i l_i} - a_{k_i l_i})^2}\right] \\
      &= 2 \exp\left[-\frac{2 L^2 \varepsilon^2}{\sum_{i=1}^L \frac{4 \cdot 2^n}{\tr[\psi_{k_i} P_{l_i}]^2}}\right] \\ 
      &\leq 2 \exp\left[-\frac{2 L^2 \varepsilon^2}{\sum_{i=1}^L \frac{4 \cdot 2^n}{\mathbb{M}^2}}\right] \\
      &= 2 \exp\left[-\frac{L \varepsilon^2}{2 \cdot 2^n \cdot \mathbb{M}^{-2}}\right] \\
      &\leq \delta,
    \end{align}
    where $\mathbb{M}:= \min_{k, l} \vert \tr[\psi_{k} P_{l}] \vert$. Thus, we have
    \begin{align}
        L
        &\geq \frac{2 \cdot 2^n}{\mathbb{M}^2} \frac{1}{\varepsilon^2} \ln\frac{2}{\delta}.
    \end{align}
    Now, we investigate the relationship between $\mathbb{M}$ and 2-stabilizer Re\'nyi entropy. It is noteworthy that 
    \begin{align}
        \mathbb{M} 
        = \min_{k} \min_{l} \vert \tr[\psi_{k} P_{l}]
        &\leq  \min_{k} \mathbb{E}_l \left[\vert \tr[\psi_{k} P_{l}] \vert\right]  \\
        &\leq  \min_{k} \sqrt{\mathbb{E}_l \left[\tr[\psi_{k} P_{l}]^2\right]}  \\
        &= \min_{k} \sqrt{\frac{\exp[-M_2(\psi_k)]}{2^n}},
    \end{align}
    resulting in the derived lower bound
    \begin{align}
        L
        &\geq \frac{2 \cdot 4^n}{\varepsilon^2} \ln \frac{2}{\delta} \max_k~\exp[M_2(\psi_k)].
    \end{align}

    Now, we delve into the examination of the upper bound. The approach to establish the upper bound involves truncating $\psi_k$ with respect to its Pauli coefficients. This truncated $\psi_k$ is subsequently employed in the execution of the protocol. The specifics are outlined as follows. Initially, we define
    \begin{align}
        \label{def: truncated psi_k}
        \tr\left[\psi_{k,cut} P_{l}\right]
        &:= \left\{
            \begin{array}{lr}  
                 \tr[\psi_k P_{l}],~{\rm if}~\vert \tr[\psi_k P_{l} \vert\geq \frac{\varepsilon}{2\sqrt{2}} \sqrt{\frac{\exp[-M_0(\psi_k)]}{2^n}}, &  \\  
                 0,~{\rm otherwise}. &  
             \end{array} 
        \right.
    \end{align}
    We define $\psi_k':= {\psi_{k,cut}}/{\Vert \psi_{k,cut} \Vert_2}$ where $\Vert \cdot \Vert_2$ denotes Frobenius norm and define
    \begin{align}
        \cT_k^\prime:= \left\{l \Big\vert \vert \tr[\psi_k P_{l} \vert\geq \frac{\varepsilon}{2\sqrt{2}} \sqrt{\frac{\exp[-M_0(\psi_k)]}{2^n}}, \forall l \right\}.
    \end{align}
    We use $\psi_k'$ to replace $\psi_k$, so we have
    \begin{align}
        \cF'
        &= \frac{1}{2^n} \sum_{k=1}^{2^n} \tr[\psi_k'V_k].
    \end{align}
    Now we aim to evaluate $\cF'$ instead of $\cF$. Therefore, we need to bound the difference between $\cF$ and $\cF'$. We have $\left\vert \cF^\prime - \cF \right\vert = \left\vert \sum_{k=1}^{2^n} \tr[\psi_k'V_k] - \sum_{k=1}^{2^n} \tr[\psi_kV_k] \right\vert / 2^n \leq {\sum_{k=1}^{2^n} \left\vert \tr[V_k (\psi_k'- \psi_k)] \right\vert}/{2^n}$ as
    \begin{align}
        \frac{ \sum_{k=1}^{2^n} \left \vert \tr[V_k (\psi_k'- \psi_k)] \right\vert}{2^n}
        &\leq  \sqrt{\frac{\sum_{k=1}^{2^n} \left\vert \tr[V_k (\psi_k' - \psi_k)] \right\vert^2}{2^n}} \\
        &\leq  \sqrt{\frac{\sum_{k=1}^{2^n}  \Vert V_k \Vert_2^2  \Vert \psi_k' - \psi_k \Vert_2^2}{2^n}} \\
        &\leq  \sqrt{\frac{\sum_{k=1}^{2^n} \tr[V_k^2] \cdot \frac{\varepsilon^2}{4}}{2^n}} \\
        &\leq \frac{\varepsilon}{2},
    \end{align}
    where the first inequality is due to AM-QM inequality, the second inequality is due to $\rm H\ddot{o}lder's$ inequality, the third inequality is due to the property that $V_k = V_k^\dagger$ and
    \begin{align}
        \Vert \psi_{k^\prime} - \psi_{k} \Vert_2^2 = 2 - 2 \tr[\psi_{k}^{\prime} \psi_k] = 2 - 2\left[1 - \sum_{l \notin \cT'} \tr[\psi_k P_{l}]^2\right] = 2 - 2\left[1 - \frac{\varepsilon^2}{8}\frac{\vert \overline{\cT_k^\prime} \vert}{\vert \cT_k^\prime \vert}\right] \leq \frac{\varepsilon^2}{4},
    \end{align}
    where $\overline{\cT^\prime}$ is the complement set of the set ${\cT^\prime}$ and we employ Lemma~\ref{lemma: trace(Vk square)} in the final inequality. Consequently, we deduce $\vert \cF' - \cF \vert < \frac{\varepsilon}{2}$. Subsequently, we utilize $\wt{Y}^{\prime}$ to approximate $\cF$ and have the following relation
    \begin{align}
        \vert \cF - \wt{Y}^{\prime} \vert
        = \vert \cF - \cF^{\prime} + F^{\prime} - \wt{Y}^{\prime} \vert
        \leq \vert \cF - \cF' \vert + \vert \cF' - \wt{F}^{\prime} \vert
        < \frac{\varepsilon}{2} + \vert \cF' - \wt{Y}^{\prime} \vert,
    \end{align}
    which implies the sufficient condition $\vert \cF' - \wt{Y}^{\prime} \vert \leq {\varepsilon}/{2}$ to satisfy $\vert \cF - \wt{Y}^{\prime} \vert<\varepsilon$. The expression for $\wt{Y}^{\prime}$ is given by
    \begin{align}
        \wt{Y}^{\prime}
        &= \frac{1}{L} \sum_{i=1}^{L} \frac{2^n}{\tr[\psi_{k_i}^{\prime} P_{l_i}]} \times \lambda_{a_i}^{l_i} \times \delta_{k_i o_i} \equiv \frac{1}{L} \sum_{i=1}^{L} C_{k_i l_i o_i}^{\prime},
    \end{align}
    and the upper bound $b_{k_i l_i}^\prime$ and the lower bound $a_{k_i l_i}^\prime$ of $C_{k_i l_i o_i}^{\prime}$ satisfy $b_{k_i l_i}^\prime = - a_{k_i l_i}^\prime = \sqrt{2^n}/\tr[\psi_{k_i}^{\prime} P_{l_i}]$. By 
 virtue of Hoeffding's inequality, we have
    \begin{align}
        {\rm Pr}(\vert \wt{Y}^{\prime} - \cF' \vert \leq \frac{\varepsilon}{2})
        &= {\rm Pr}\left(\sum_{i=1}^L {C_{k_i l_i o_i}}^{\prime} - L \cF' \leq L \frac{\varepsilon}{2}\right) \\
        &\geq 1 - 2 \exp\left[- \frac{2 \cdot \frac{L^2 \varepsilon^2}{4}}{\sum_{i=1}^L (b_{k_i l_i}^\prime - a_{k_i l_i}^\prime)^2}\right] \\
        &= 1 - 2 \exp\left[- \frac{L^2 \varepsilon^2}{\sum_{i=1}^L \frac{8\cdot 2^n}{\tr[\psi_{k_i}^{\prime} P_{l_i}]^2}}\right] \\
        &\geq 1 - 2 \exp\left[- \frac{L^2 \varepsilon^2}{\sum_{i=1}^L \frac{8 \cdot 2^n}{\mathbb{M}'^2}}\right] \\
        &= 1 - \delta,
    \end{align}
    where
    \begin{align}
        \mathbb{M}^{\prime}
        := \min_{k} \min_{l} \vert \tr[\psi_{k}^{\prime} P_{l}] \vert
        = \min_{k} \min_{l} \frac{\vert\tr[\psi_{k,cut} P_l]\vert}{\Vert \psi_{k,cut} \Vert_2}
        &\geq \min_{k} \min_{l} {\vert\tr[\psi_{k,cut} P_l]\vert} \\
        &\geq \min_k \frac{\varepsilon}{2\sqrt{2}} \sqrt{\frac{\exp[-M_0(\psi_k)]}{2^n}}.
    \end{align}
    Finally, we obtain the upper bound as
    \begin{align}
        L = \frac{8 \cdot 2^n}{\mathbb{M}^{\prime 2}} \frac{1}{\varepsilon^2}\ln\frac{2}{\delta}
        \leq \frac{64 \cdot 4^n}{\varepsilon^4} \ln\frac{2}{\delta} \max_k~\exp[M_0(\psi_k)].
    \end{align}

Substituting $\varepsilon$ and $\delta$ with $2 \varepsilon$ and $2 \delta$, respectively, yields Theorem \ref{thm: magic bounds for original protocol}, presenting the bounding results for a $2 \varepsilon$-close estimator.
\end{proof}

\section{Proof of Proposition~\ref{coro: var comp}}
\label{appx: prop. 8}
\begin{proof}[Proof of Proposition~\ref{coro: var comp}]
    We conclude from the proof of Theorem~\ref{thm: sc of original protocol},
    which concerns the sample complexity of \textbf{Protocol~\ref{ptl: original protocol}}, 
    that
    \begin{align}\label{eq:tmp1}
        c_3 =  \frac{1}{2^n}\sum_k \tr[V_k^2] -\cF^2.
    \end{align}
    In contrast, for \textbf{Protocol~\ref{ptl: efficient protocol}} it holds that
    \begin{align}\label{eq:tmp1}
        c_1 =  \frac{\sum_{l=1}^{4^n} \left[\sum_{k=1}^{2^n}\tr[V_kP_l]\tr[\psi_kP_l]\right]^2}{2^n\left(\sum_{k'=1}^{2^n}\tr[\psi_{k’}P_l]^2\right)} - \cF^2,
    \end{align}
    as evident from the proof of Theorem~\ref{thm: sc of efficient protocol}. 
    Then Cauchy-Schwarz inequality ensures that $c_1 \leq c_3$. 
\end{proof}

\section{More convergence analysis results}
\label{appx: more convergence analysis results}

In this section, we present the convergence analysis outcomes of the EJM schemes 
with $\theta = \pi / 2$ and $\theta = \pi / 4$. 
The visual representations of these results are provided in 
Figs.~\ref{fig: comparisons be two protocol on EJM with 0.5pi} 
and~\ref{fig: comparisons be two protocol on EJM with 0.25pi}, respectively. 
When these findings are considered alongside the results corresponding to $\theta = \pi / 3$, 
as depicted in Fig.~\ref{fig: comparisons be two protocol on EJM with 0.33pi}, 
it becomes evident that our efficient protocol exhibits a advantage over the direct protocol, 
particularly as $\theta$ increases.

\begin{figure}[!hbtp]
	\centering
\includegraphics[width=0.8\textwidth]{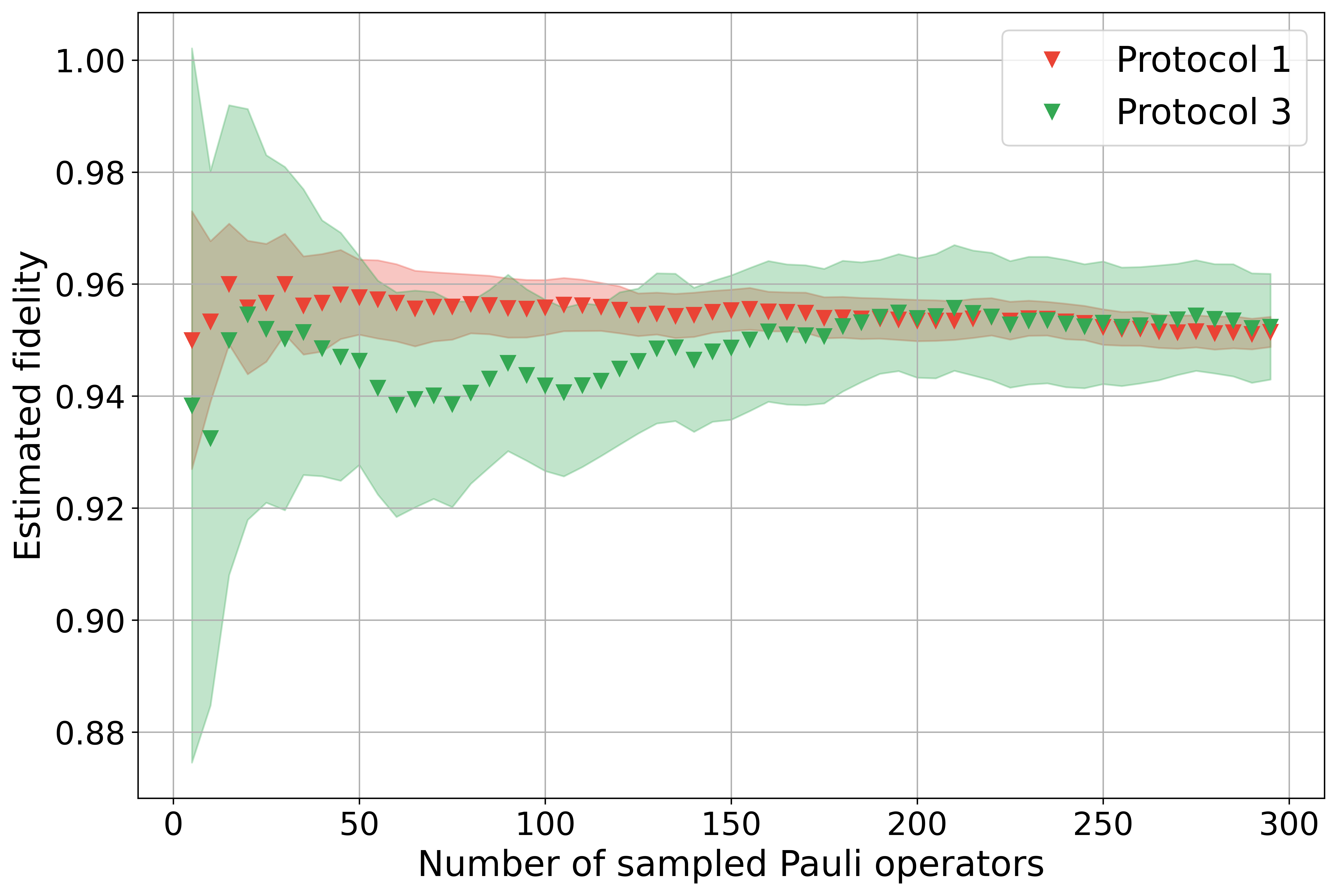}
	\caption{Performance comparisons of two protocols on the EJM scheme with $\theta = \pi / 2$ 
        corresponds to the Bell measurement up to local unitaries \cite{tavakoli2021bilocal}.}
	\label{fig07}
 \label{fig: comparisons be two protocol on EJM with 0.5pi}
\end{figure}

\begin{figure}[!hbtp]
	\centering
\includegraphics[width=0.8\textwidth]{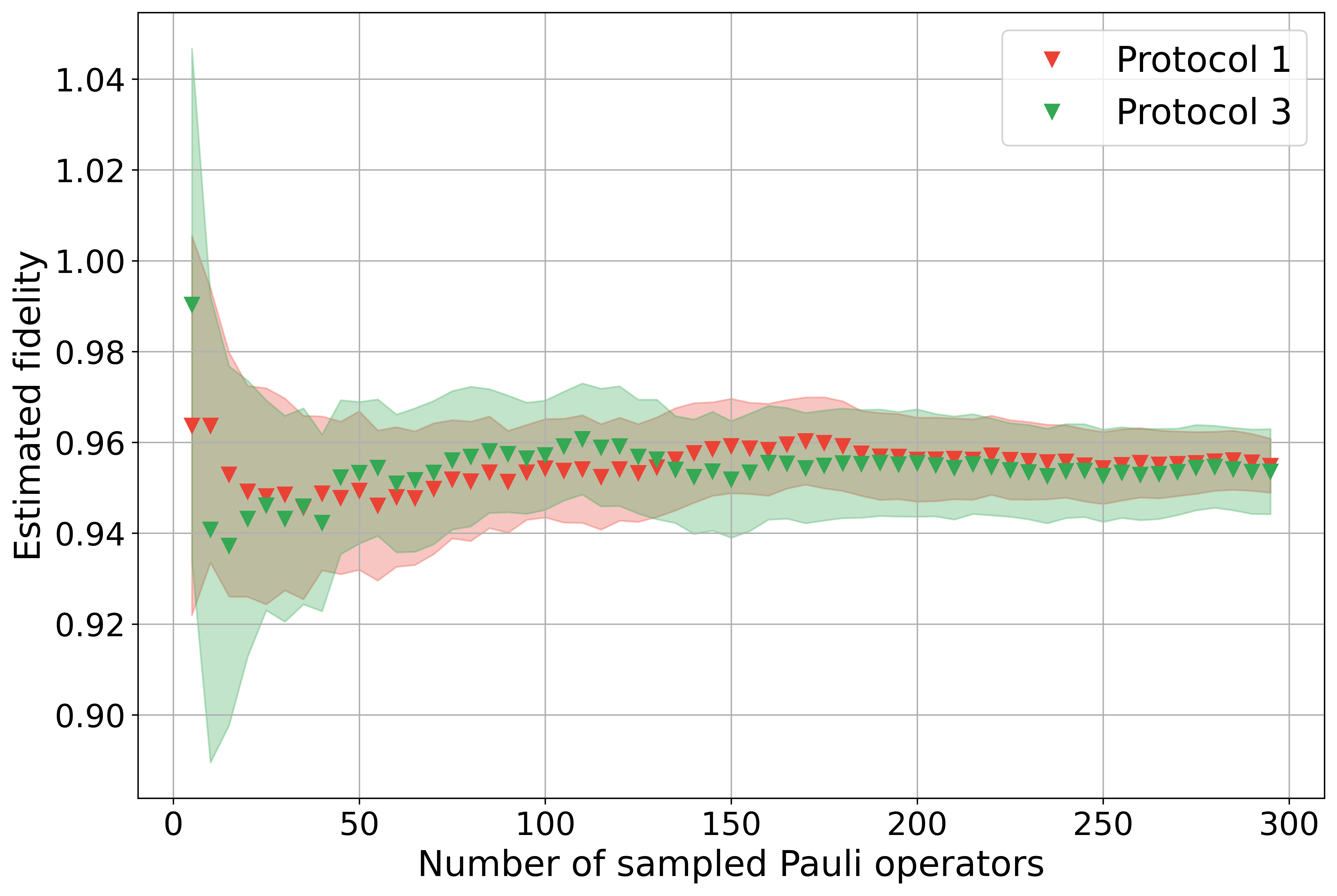}
	\caption{Performance comparisons of two protocols on the EJM scheme with $\theta = \pi / 4$.}
	\label{fig2}
 \label{fig: comparisons be two protocol on EJM with 0.25pi}
\end{figure}

\end{document}